\def\draft{0}
\newtheorem{theorem}{Theorem}[section]
\newtheorem{nt}{Selfnote}
\newtheorem{conjecture}[theorem]{Conjecture}
\newtheorem{definition}[theorem]{Definition}
\newtheorem{lemma}[theorem]{Lemma}
\newtheorem{claim}[theorem]{Claim}
\newtheorem{remk}[theorem]{Remark}
\newtheorem{exmp}[theorem]{Example}
\newenvironment{remark}{\begin{remk}
\begin{normalfont}}{\end{normalfont}
\end{remk}}
\def\FullBox{\hbox{\vrule width 8pt height 8pt depth 0pt}}
\def\qed{\ifmmode\qquad\FullBox\else{\unskip\nobreak\hfil
\penalty50\hskip1em\null\nobreak\hfil\FullBox
\parfillskip=0pt\finalhyphendemerits=0\endgraf}\fi}
\def\qedsketch{\ifmmode\Box\else{\unskip\nobreak\hfil
\penalty50\hskip1em\null\nobreak\hfil$\Box$
\parfillskip=0pt\finalhyphendemerits=0\endgraf}\fi}
\newenvironment{proof}{\begin{trivlist} \item {\bf Proof:~~}}
  {\qed\end{trivlist}}
\newcommand{\etal}{{ et~al.\ }}
\newcommand{\R}{{\mathbb R}}
\newcommand{\Z}{{\mathbb Z}}
\newcommand{\pmone}{\{-1,1\}}
\def\abs#1{\left| #1 \right|}
\newcommand{\norm}[1]{\ensuremath{\left\lvert #1 \right\rvert}}
\newcommand{\mydot}[2]{\ensuremath{\left\langle #1, #2 \right\rangle}}
\newcommand{\one}{{\mathds{1}}}
\newcommand{\ind}[1]{\one_{\{#1\}}}
\renewcommand{\Pr}{\mathop{\mathbb P}\displaylimits}
\newcommand{\pr}{\mathop{\mathbb P}\displaylimits}
\let\av=\Exp
\newcommand{\inparen}[1]{\left(#1\right)}             
\newcommand{\inbrace}[1]{\left\{#1\right\}}           
\newcommand{\insquar}[1]{\left[#1\right]}             
\newfont{\inhead}{eufm10 scaled\magstep1}
\newcommand{\suchthat}{{\;\; : \;\;}}
\newcommand{\UGC}{{\sf Unique Games Conjecture}\xspace}
\newcommand{\ULC}{{\sf Unique Label Cover}\xspace}
\newcommand{\LC}{{\sf Label Cover}\xspace}
\newcommand{\mnote}[1]{[{\bf Madhur's Note: #1]}}
\newcommand{\mnote}[1]{}
\newcommand{\Nnote}[1]{[{\bf Nisheeth's Note: #1]}}
\newcommand{\Nnote}[1]{}
\newcommand{\size}[1]{\left|#1\right|}
\newcommand{\expec}[1]{\mathbb{E} \left[#1\right]}
\newcommand{\var}[1]{{\sf Var} \left[#1\right]}
\newcommand{\prob}[1]{{\mathbb{P}} \left[#1\right]}
\def\sph{\mathbb{S}}
\def\argmin{\operatorname{argmin}}
\renewcommand{\norm}[1]{\ensuremath{\left\lVert #1 \right\rVert}}
\newcommand{\instance}{{\ensuremath{{\mathcal U} = (V,W,E)}}\xspace}
\renewcommand{\L}{{\cal L}}
\newcommand{\U}{{\cal U}}
\newcommand{\val}{\mathsf{val}}
\newcommand{\opt}{\mathsf{opt}}
\newcommand{\bfb}{{\bf b}}
\newcommand{\bfz}{{\bf z}}
\newcommand{\bfy}{{\bf y}}
\newcommand{\bfx}{{\bf x}}
\newcommand{\subspace}[2]{\textsf{Subspace}(#1,#2)}
\newcommand{\tr}{\mathsf{Tr}}
\newcommand{\fb}[1]{f_{b_{#1}}}
\newcommand{\fbpi}[1]{f_{\pi(b_{#1})}}
\newcommand{\nkfactor}{\alpha_{n,k}}
\newcommand{\nfrac}{\nicefrac}
\newcommand{\defeq}{\stackrel{\textup{def}}{=}}
\renewcommand{\epsilon}{\varepsilon}
\renewcommand{\bfz}{z}
\renewcommand{\bfy}{y}
\renewcommand{\bfx}{x}
\begin{document}

\title{Algorithms and Hardness for Subspace Approximation\ifnum\draft=1 \\
\textsc{\small Working Draft: Please Do Not Distribute}\fi}
\author{Amit Deshpande\thanks{Microsoft Research India. {\tt amitdesh@microsoft.com}.}
\and Madhur Tulsiani\thanks{Institute for Advanced Study. {\tt madhurt@math.ias.edu}. This material
is based upon  work supported by the National Science Foundation under grant No.  CCF-0832797
and IAS Sub-contract no. 00001583. Work partly done while visiting Microsoft Research India.}
\and Nisheeth K. Vishnoi\thanks{Microsoft Research India. {\tt nisheeth.vishnoi@gmail.com}}}

\begin{titlepage}
\maketitle

\begin{abstract}
The subspace approximation problem \subspace{$k$}{$p$} asks for a $k$-dimensional linear subspace that fits a given set of $m$ points in $\R^n$ optimally. The error for fitting is a generalization of the least squares fit and uses the $\ell_{p}$ norm of the ($\ell_2$) distances of the points from the subspace. Most of the previous work on subspace approximation has focused on small or constant $k$ and $p = 1$ or $\infty$, using coresets and sampling techniques from computational geometry.

\medskip
In this paper, extending another line of work based on convex relaxation and rounding, we give a polynomial time algorithm, \emph{for any $k$ and any $p \geq 2$}.This extends a result of Varadarajan, Venkatesh, Ye and Zhang \cite{VVYZ}, who gave an $O(\sqrt{\log m})$ approximation for all $k$ and $p = \infty$. The approximation guarantee of our algorithm is roughly $\sqrt2 \gamma_p$ where $\gamma_p \approx \sqrt{p/e}(1+o(1))$ is the
$p^th$ norm of a standard normal random variable. The approximation ratio improves to $\gamma_p$ in the 
interesting special case when $k = n-1$. We also show that the convex relaxation we use has an integrality gap (or ``rank gap'') of $\gamma_{p} (1 - \epsilon)$, for any constant $\epsilon > 0$.

\medskip
We also study the hardness of approximating this problem. We show that assuming the Unique Games Conjecture, the subspace approximation problem is hard to approximate within a factor better than $\gamma_{p} (1 - \epsilon)$, for any constant $\epsilon > 0$. The hardness reduction involves a dictatorship test which is somewhat different
from ``long-code'' based tests used in reductions from Unique Games, and seems better suited for problems
of a continuous nature.
\end{abstract}

\thispagestyle{empty}
\vfill
\noindent \textbf{Keywords:} approximation algorithms, convex programming, unique games

\end{titlepage}

\section{Introduction}
Large data sets that arise in data mining, machine learning, statistics and computational geometry problems are naturally modeled as sets of points in a high-dimensional Euclidean space. Even though these points live in a high-dimensional space, in practice they are observed to have low intrinsic dimension and it is an algorithmic challenge to capture their underlying low-dimensional structure. The subspace approximation problem described below generalizes several problems formulated in this context.

\noindent \underline{\subspace{$k$}{$p$}}: Given points $a_{1}, a_{2}, \dotsc, a_{m} \in \R^{n}$, an integer $k$, with $0 \leq k \leq n$, and $p \geq 1$, find a $k$-dimensional linear subspace that minimizes the sum of $p$-th powers of Euclidean distances of these points to the subspace, or equivalently,
\[
\underset{V \suchthat \dim(V)=k}{\text{minimize}} \quad \left(\sum_{i=1}^{m} d(a_{i}, V)^{p}\right)^{\nfrac{1}{p}}.
\]

Note that, here, $\ell_{p}$ norm is used as a function of $(d(a_{1}, V), d(a_{2}, V), \dotsc, d(a_{m}, V))$;
the individual distances $d(a_{i}, V)$ are the usual $\ell_{2}$ distances.

We describe below the special cases of the subspace approximation problem which have been studied previously
and the known results about them.
\begin{enumerate}
\item \textbf{Low-rank matrix approximation problem or the least squares fit ($p=2$)}:
Given a matrix $A \in \R^{m \times n}$ and $0 \leq k \leq n$,
the matrix approximation problem is to find another matrix $B \in \R^{m \times n}$
of rank at most $k$ that minimizes the Frobenius (also known as Hilbert-Schmidt)
norm of their difference
$\norm{A-B}_{F} \defeq \left(\sum_{ij} (A_{ij} - B_{ij})^{2}\right)^{\nfrac{1}{2}}$.
Taking the rows of $A$ to be points $a_{1}, a_{2}, \dotsc, a_{m} \in \R^{n}$,
the above problem is equivalent to the problem \subspace{$k$}{$2$}.
Elementary linear algebra
shows that the optimal subspace is spanned by the top $k$ right singular
vectors of $A$, which can be found in time $O(\min\{mn^{2}, m^{2}n\})$ using
Singular Value Decomposition (SVD) \cite{GVL}.

\item \textbf{Computing radii of point sets ($p = \infty$)}:
Given points $a_{1}, a_{2}, \dotsc, a_{m} \in \R^{n}$, their \emph{outer $(n-k)$-radius}
is defined as the minimum, over all $k$-dimensional linear subspaces, of the maximum
Euclidean distance of these points to the subspace (which is equivalent to
\subspace{$k$}{$\infty$}). Gritzmann and Klee initiated the study of this quantity in
computational convex geometry \cite{GK} and gave a polynomial time algorithm for the
minimum enclosing ball problem (or the problem \subspace{0}{$\infty$}).

\begin{enumerate}
\item \emph{For small $k$:} B\u{a}doiu, Har-Peled and Indyk \cite{BHPI} gave
a $(1+\epsilon)$-approximation algorithm running in polynomial time
for the minimum enclosing
cylinder problem (equivalent to \subspace{1}{$\infty$}), which was further
extended by Har-Peled and Varadarajan \cite{HPV} to \subspace{$k$}{$\infty$}
for constant $k$.

\item \emph{For large $k$:} Brieden, Gritzmann and Klee \cite{BGK} showed that it
is NP-hard to approximate the width of a point set (equivalent to \subspace{$n-1$}{$\infty$})
within any constant factor. From the algorithmic side, the results by Nesterov
\cite{N} and Nemirovski, Roos and Tarlaky \cite{NRT} on quadratic optimization
imply $O(\sqrt{\log m})$-approximation for \subspace{$n-1$}{$\infty$} in polynomial time.
Building on these techniques, Varadarajan, Venkatesh, Ye and Zhang \cite{VVYZ}
gave a polynomial time $O(\sqrt{\log m})$-approximation algorithm for \subspace{$k$}{$\infty$},
for any $k$.
On the hardness side, they proved that there exists a constant
$\delta > 0$ such that, for any $0 < \epsilon < 1$ and $k \leq n - n^{\epsilon}$, there is no polynomial time algorithm that gives $(\log m)^{\delta}$-approximation for \subspace{$k$}{$\infty$} unless
${\rm NP} \subseteq {\rm DTIME}\left(2^{\mathrm{polylog}(n)}\right)$.
\end{enumerate}

\item \textbf{Other values of $p$}:
For general $p$ and constant $k$, a result of Shyamalkumar and Varadarajan
\cite{SV} and subsequent work by Deshpande and Varadarajan \cite{DV} gave a
$(1+\epsilon)$-approximation algorithm with running time
$O\left(mn \cdot \exp(k, p, \nfrac{1}{\epsilon})\right)$.
The running time was recently improved to
$O\left(mn \cdot \mathrm{poly}(k,  \nfrac{1}{\epsilon}) + (m+n) \cdot \exp(k,  \nfrac{1}{\epsilon})\right)$
by Feldman, Monemizadeh, Sohler and Woodruff \cite{FMSW}, for the case $p=1$.
\end{enumerate}

For $p \neq 2$, we do not know any suitable generalization of SVD, and therefore, have no exact characterization of the optimal subspace. The approximation techniques used so far to overcome this are:
\begin{inparaenum}[(i)]
\item coresets and sampling-based techniques: which give nearly optimal approximations but only for small or constant $k$ and $p$.
\item convex relaxations and rounding: which give somewhat sub-optimal approximations mostly for large values of $k$; the only exception is the result of Varadarajan, Venkatesh, Ye and Zhang \cite{VVYZ} which works for any $k$
(but only for $p=\infty$).
\end{inparaenum}

\subsubsection*{Our work}
In this paper, we study the problem \subspace{$k$}{$p$} for $p < \infty$, about which
little is known in general. One motivation for doing so is that often
the case $p<\infty$ gives significantly better approximation guarantees and requires
somewhat different techniques to analyze than $p = \infty$.
This is evident in the work for subspace approximation for small $k$ (\cite{DV} and
\cite{FMSW} for $p < \infty$ versus \cite{BHPI} and \cite{HPV} for $p = \infty$) and
in the work on regression (\cite{C05} and \cite{DDH+08} versus the $p = \infty$ case
which is solvable by fixed dimensional linear programming). Also, in the study of hardness
of approximation, the case $p = \infty$ can often be reduced to a discrete problem; while
the case $p < \infty$ is inherently of a more continuous nature, and requires somewhat
different techniques.

On the algorithmic side, we give a factor $\gamma_p \cdot \sqrt{2 - (\nfrac{1}{n-k})}$
approximation algorithm for the problem \subspace{$k$}{$p$} in $\R^n$, where
$\gamma_p \approx \sqrt{p/e}(1+o(1))$ is the $p^{th}$ norm of a standard Gaussian. 
Our algorithm is based on
a convex relaxation, similar to the semi-definite relaxations used in \cite{NRT}
and \cite{VVYZ} for $p = \infty$. We give a tighter analysis for general $p$. We
also exhibit gap instance for the convex program. We show a gap of 
factor $\gamma_p$ for \subspace{$k$}{$p$} (when $k$ is superconstant) 
showing that our analysis is tight up to the factor of
$\sqrt{2 - (\nfrac{1}{n-k})}$.

We also investigate the hardness of approximation for \subspace{$k$}{$p$}.
We give a reduction from the \ULC problem of Khot \cite{Khot02} to
the problem of approximating \subspace{$n-1$}{$p$} within a factor $\gamma_p$
(which can trivially be extended to a reduction to \subspace{$k$}{$p$}
for $k = n^{\Omega(1)}$). The reduction is related to the ones used for similar
geometric problems in \cite{KNS}, \cite{KN08} and \cite{ABHKS05}. However,
an interesting difference here in comparison to usual reductions is that we use 
a different (real-valued) encoding of
the assignment to \ULC (in terms of the Fourier coefficients of the long-code
instead of the truth table) which is more natural in our context. This may also 
be useful for other problems of a continuous nature.

Very recently, our techniques were also extended by Guruswami \etal \cite{GRSW10} 
to give a  reduction from the \LC problem (\emph{without} assuming the uniqueness property) 
to approximating \subspace{$n-1$}{$p$} within a factor of $\gamma_p$. 
This proves an unconditional NP-hardness for the latter problem.

\subsubsection*{Other related problems}
\paragraph{$L_p$-Grothendieck problem.}
In the $k=n-1$ case, subspace approximation problem can be rewritten as
$\min_{\norm{z}_{2}=1} \norm{Az}_{p}$,
where the rows of $A \in \R^{m \times n}$ represent the points $a_{1}, a_{2}, \dotsc, a_{m}$ and $z \in \R^{n}$ represents the unit normal to the subspace we are asked to find. When $A$ is invertible, this problem can be shown (using duality in Banach spaces) to be equivalent to a special case of the $L_{p}$-Grothendieck problem (introduced by Kindler, Naor and Schechtman \cite{KNS}) which asks for maximizing $x^TMx$ subject to
$\norm{x}_p \leq 1$. $\subspace{n-1}{p}$ with invertible $A$, reduces to this problem with 
$M = (A^{-1})^TA^{-1}$.

In this special case, using Grothendieck's inequality and a technique by Alon and Naor \cite{AN}, one can get $O(1)$-approximation. Moreover, in this case, the above problem is also equivalent to finding diameters of convex bodies given by $\norm{Ax}_{p} \leq 1$ and computing $p \mapsto 2$ norm of the matrix $A^{-1}$.

\vspace{-2mm}
\paragraph{$l_p$-regression problem.}
In the $l_p$ regression problem, we are given an $m \times n$ matrix $A$ and
a vector $b \in \R^m$, and the goal is to minimize $||Az-b||_p$ over all
$z \in \R^n$. This is clearly related to subspace approximation with
$k = n-1$, but the fact that $z$ is unconstrained makes it a convex
optimization problem. Efficient approximation algorithms for the regression
problem are given by Clarkson \cite{C05} for $p = 1$, Drineas, Mahoney, and
Muthukrishnan \cite{DMM06} for $p = 2$, and Dasgupta et al. \cite{DDH+08}
for $p \geq 1$. It is not clear that these results can be employed
fruitfully for the subspace approximation problem for $k = n - 1$ where it is
required that $\norm{z} \geq 1$.

\section{Preliminaries and Notation} \label{sec:prelims}
Throughout this paper, $\norm{\cdot}_p$ denotes the $p$-norm. Norms of vectors are taken
with respect to the counting measure and of functions are taken with respect to the
uniform probability measure on their domain. When the subscript is unspecified, 
$\norm{\cdot}$ denotes $\norm{\cdot}_2$.

\subsection{The Subspace Approximation Problem}
We will use a formulation of the problem \subspace{$k$}{$p$}
for points $a_1, \ldots, a_m$, in terms of the orthogonal complement of the desired
subspace $V$. Let $\bfz_1, \ldots, \bfz_{n-k}$ be an orthonormal basis for the orthogonal complement
and let $Z \in \R^{n \times (n-k)}$ denote the matrix with the $j^{th}$ column
$Z^{(j)} = \bfz_j$. Then $d(a_i,V) = \norm{a_i^T Z}_2$ and the problem of
finding (the orthogonal complement of) the subspace can be stated as
\begin{align*}
\text{minimize}\quad\quad &\left(\sum_{i = 1}^m \norm{a_i^T Z}_2^p\right)^{\nfrac{1}{p}} \\
\text{subject to:}\quad\quad &\norm{Z^{(j)}} \geq 1 ~\forall j \in \{1, \ldots, n-k\}\\
& \mydot{Z^{(j_1)}}{Z^{(j_2)}} = 0 ~\forall j_1 \neq j_2, ~~~Z \in \R^{n \times (n-k)}
\end{align*}

For the hardness results we shall be concerned with the special case of the problem
with $k = n-1$. For points $a_1, \ldots, a_m \in \R^n$, let $A$ be $m \times n$
matrix with $A_i = a_i^T$. The problem \subspace{$n-1$}{$p$} is then simply to
minimize $\norm{A z}_p$ for $z \in \R^n$, subject to $\norm{z}_2  \geq 1$.

\begin{remark}
It is easy to check
that (by a change of variable and suitable modification of $A$) both the norms can be
taken to be with respect to an arbitrary measure instead of the counting measure.
In particular, if $A \in \R^{m\times n}$, the $p$-norm is taken with respect
to a measure $\mu$ on $[m]$ and the 2-norm with respect
a measure $\nu$ on $[n]$, then we change variables to $\tilde{\bfz}$ with
$\tilde{z}_j = \sqrt{\nu(j)}z_j$ and modify $A_{ij}$ to
$A_{ij} (\mu(i))^{1/p} / \sqrt{\nu(j)}$ to get an equivalent problem
with norms according to the counting measure.
 \end{remark}

\subsection{Bernoulli and Gaussian Random Variables}
A Bernoulli random variable is a discrete random variable taking values in $\{-1,1\}$ with probability $\nfrac{1}{2}$ each. A standard normal random variables (or $1$-dimensional Gaussian) is a continuous random variable with probability density function $1/\sqrt{2\pi} \cdot \exp(-x^{2}/2)$. We use $\gamma_p$ to denote the $p^{th}$ moment of $N(0,1)$,
\[
\gamma_{p} \defeq  \left(\int_{-\infty}^{\infty} \abs{x}^{p} \cdot \frac{e^{-\nfrac{x^{2}}{2}}}{\sqrt{2\pi}}  dx\right)^{\nfrac{1}{p}}
=~ \left(\frac{2^{\nfrac{p}{2}} \cdot \Gamma\left(\frac{p+1}{2}\right)}{\sqrt{\pi}}\right)^{\nfrac{1}{p}}
\approx~~ \sqrt{\frac{p}{e}}(1+o(1)).
\]

We shall require both upper and lower bounds on moments of a sum of Bernoulli
random variables by the moment of an appropriate Gaussian. The following
upper bound is one direction of the Khintchine inequality
(see \cite{PS95}) well-known in functional analysis.

\begin{claim}\label{clm:upper-gaussian}
Let $x_1, \ldots, x_R$ be independent Bernoulli random variables and let
$c_1, \ldots, c_R \in \R$ and $\norm{\bf c} = \sqrt{c_1^2 + \cdots + c_R^2}$.
Then for any positive $p > 0$,
\[\av_{x_1, \ldots, x_R } \insquar{\inparen{\sum_{i=1}^R c_i \cdot x_i}^p}
  ~~\leq~~ \gamma_p^p \cdot \norm{\bf c}^{p} \]
\end{claim}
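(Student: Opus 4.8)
Here is how I would approach the proof.

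The inequality is Khintchine's inequality with the best possible constant, and the natural plan is to compare the Bernoulli sum with a Gaussian of the same variance. Let $g_1,\dots,g_R$ be independent standard normals. Since $\sum_i c_i g_i \sim N(0,\norm{\bf c}^2)$, the definition of $\gamma_p$ gives $\E_g\big[|\sum_i c_i g_i|^p\big] = \gamma_p^p \norm{\bf c}^p$, so it suffices to prove the comparison
\[
\E_x\left[\,\left|\sum_i c_i x_i\right|^p\,\right] \;\le\; \E_g\left[\,\left|\sum_i c_i g_i\right|^p\,\right].
\]
(When $p$ is an odd integer the left-hand side of the claimed inequality is $0$ by the symmetry $x\mapsto-x$, so there is nothing to prove; for non-integer $p$ one reads $(\cdot)^p$ as $|\cdot|^p$.)

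For $p$ an even integer the comparison has a one-line proof: expand both $p$-th powers by the multinomial theorem and take expectations. By independence and symmetry, the only monomials $\prod_i c_i^{\alpha_i}$ surviving on either side are those with every $\alpha_i$ even; such a monomial is nonnegative, carries the same multinomial coefficient on both sides, and has expectation $1$ on the Bernoulli side versus $\prod_i(\alpha_i-1)!! \ge 1$ on the Gaussian side. Summing term by term proves the comparison, hence the claim, for even $p$.

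For a general real exponent $p\ge 2$ this is Haagerup's determination of the optimal constant in Khintchine's inequality, and here I would simply invoke the standard reference \cite{PS95}; I expect this to be the only real obstacle. It is worth recording why the even-integer argument does not extend cheaply. A one-coordinate-at-a-time hybrid (replace $x_i$ by $g_i$ one index at a time, conditioning on the rest) would reduce everything to the one-variable inequality $\frac12\big(|a+b|^p+|a-b|^p\big)\le \E_g\big[|a+bg|^p\big]$, but this is \emph{false} for $2<p<3$ (e.g.\ $p=2.5$, $a=1$, small $b$): the binomial coefficient $\binom{p}{4}$ controlling the fourth-moment correction is negative there, so the Gaussian's larger fourth moment actually makes its $p$-th moment \emph{smaller} than that of the symmetric two-point distribution with the same mean and variance. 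Interpolating $p$ between consecutive even integers also fails, because $q\mapsto\log\norm{X}_q$ is convex in $1/q$ (log-convexity of $L^q$ norms) for $X$ the Bernoulli sum and for $X$ a standard Gaussian alike, so the interpolation bound points the wrong way. Only Haagerup's integral-representation argument is known to handle the range $2<p<3$, and hence all real $p\ge 2$, with the sharp constant $\gamma_p$.
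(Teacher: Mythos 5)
Your proof is correct and is in fact more informative than the paper's, which gives no proof at all and simply cites \cite{PS95} for this direction of Khintchine's inequality with the sharp constant. Your even-integer argument — expand both $p$-th powers multinomially, note that only terms with all-even exponent vectors survive, that such terms have nonnegative coefficients $\prod_i c_i^{\alpha_i}$, and that $\E[x_i^{\alpha_i}]=1 \le (\alpha_i-1)!!=\E[g_i^{\alpha_i}]$ — is the standard elementary proof, and it already suffices for the principal use of the claim inside Lemma~\ref{lemma:peven}, where it is always invoked with even exponent $2p_j$. The paper also invokes the claim with a general real $p\geq 2$ (in the soundness argument of Section~\ref{sec:ug-hard}), and there your appeal to Haagerup is the right reference; your observation that the coordinate-at-a-time hybrid fails for $2<p<3$, because the pointwise two-point-versus-Gaussian inequality $\tfrac12\left(|a+b|^p+|a-b|^p\right)\le \E_g\left[|a+bg|^p\right]$ breaks down, is accurate and is precisely why the sharp-constant result is delicate in that range.

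One small caveat, not flagged in either the paper or your write-up: the claim as stated, ``for any positive $p>0$,'' is false for $0<p<2$. Taking $R=1$, $c_1=1$ gives a left-hand side of $1$, while $\gamma_p^p\norm{\bf c}^p=\gamma_p^p<1$ since $\gamma_p<\gamma_2=1$ for $p<2$; the sharp Khintchine upper constant in that range is $1$, not $\gamma_p$. This is harmless here because the paper only ever applies the claim with $p\geq 2$, but the stated hypothesis should read $p\geq 2$.
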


The following version of the reverse direction, when all $c_i$'s
are much smaller than $\norm{\bf c}$, can be derived using the Berry-Esseen
Theorem (as in \cite{vanB72}). A proof of the statement below appears in
\cite{KNS} (as Lemma 2.5).
\begin{claim}\label{clm:clt}
Let $x_1, \ldots ,x_R$ be independent Bernoulli random variables and let
$c_1, \ldots, c_R \in \R$ be such that for all $i \in [R]$, $|c_i| \leq \tau \cdot  \norm{\bf c}$
for $\tau \in (0,e^{-4})$. Then, for any $p \geq 1$,
\[\av_{x_1, \ldots, x_R } \insquar{\abs{\sum_{i=1}^R c_i \cdot x_i}^p}
~~\geq~~ \gamma_p^p \cdot \norm{\bf c}^{p} \cdot \inparen{1 - 10\tau\cdot(\log(\nfrac{1}{\tau}))^{\nfrac{p}{2}}}.
\]
\end{claim}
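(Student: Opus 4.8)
The plan is to compare the random sum $S \defeq \sum_{i=1}^R c_i x_i$ with a standard Gaussian $g \sim N(0,1)$ via a quantitative central limit theorem, and then transfer the resulting closeness of distributions to closeness of $p$-th absolute moments. Since both sides of the inequality scale the same way under rescaling of ${\bf c}$, I may assume $\norm{\bf c} = 1$, so that $S$ has mean $0$ and variance $1$ and $\E|g|^p = \gamma_p^p$. The hypothesis $|c_i|\le\tau$ gives $\sum_i |c_i|^3 \le \tau\sum_i c_i^2 = \tau$, so the Berry--Esseen theorem for independent (non-identically distributed) summands gives $\sup_{t\in\R}\,|\Pr[S\le t]-\Pr[g\le t]| \le C\tau$ for an absolute constant $C\le 1$. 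Splitting $\{|S|>t\}$ into $\{S>t\}$ and $\{S<-t\}$ and using continuity of the Gaussian cdf to absorb the atoms of $S$, this upgrades to a two-sided tail comparison,
\[
\Pr[\,|S|>t\,] \;\ge\; \Pr[\,|g|>t\,] \;-\; 2C\tau \qquad \text{for every } t\ge 0 .
\]

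Next I would integrate this against $p\,t^{p-1}\,dt$, using the layer-cake identities $\E|S|^p=\int_0^\infty p\,t^{p-1}\Pr[|S|>t]\,dt$ and $\gamma_p^p=\int_0^\infty p\,t^{p-1}\Pr[|g|>t]\,dt$. The additive slack $2C\tau$ is affordable only for moderate $t$, so I cut the integral off at $T\defeq\sqrt{2\log(1/\tau)}$ and discard the tail:
\[
\E|S|^p \;\ge\; \int_0^T p\,t^{p-1}\Pr[|S|>t]\,dt \;\ge\; \gamma_p^p \;-\; \E\!\left[\,|g|^p\,\one_{|g|>T}\,\right] \;-\; 2C\tau\,T^p .
\]
It then remains to see that the two subtracted quantities sum to less than $10\,\tau(\log(1/\tau))^{p/2}\gamma_p^p$. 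The last term is $2C\tau T^p = 2C\cdot 2^{p/2}\,\tau(\log(1/\tau))^{p/2}$, and using the exact value $\gamma_p^p = 2^{p/2}\Gamma(\tfrac{p+1}{2})/\sqrt\pi$ one has $2^{p/2}/\gamma_p^p = \sqrt\pi/\Gamma(\tfrac{p+1}{2}) \le \sqrt\pi/\min_{x\ge1}\Gamma(x) < 2.01$ for all $p\ge1$, so $2C\tau T^p < 5\,\tau(\log(1/\tau))^{p/2}\gamma_p^p$. For the Gaussian tail, $T^2=2\log(1/\tau)\ge 2p$ in the range that matters (see below), so $x^pe^{-x^2/2}$ is decreasing on $[T,\infty)$ and a single integration by parts, together with $e^{-T^2/2}=\tau$, gives $\E[|g|^p\one_{|g|>T}] = O(1)\,T^{p-1}e^{-T^2/2} = O(1)\,(2\log(1/\tau))^{(p-1)/2}\tau$, which carries a strictly lower power of $\log(1/\tau)$ and is small compared to the target. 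Adding the two bounds yields the claim.

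The step I expect to be genuinely delicate is justifying the cutoff $T=\sqrt{2\log(1/\tau)}$ — equivalently, controlling the interplay between $p$ and $\tau$. Indeed $T$ has to be at least $\sqrt{2\log(1/\tau)}$ for the Gaussian tail above to decay like $\tau$ rather than like a smaller power of $\tau$, yet the tail estimate also needs $T^2\ge 2p$, i.e.\ $p\le\log(1/\tau)$, for which there is a priori no reason. The resolution is that the statement is non-trivial only when its right-hand side is positive, i.e.\ when $10\,\tau(\log(1/\tau))^{p/2}<1$; this forces $p < 2\log(1/(10\tau))/\log\log(1/\tau)$, and for $\tau<e^{-4}$ an elementary estimate shows this in turn gives $p<\log(1/\tau)$. (When the right-hand side is $\le 0$ there is nothing to prove.) So the argument splits according to whether the statement has content, and in the content-bearing regime the choice of $T$ goes through; all the remaining pieces — Berry--Esseen, the layer-cake identity, the Gaussian tail bound, and the numerical inequality $2^{p/2}/\gamma_p^p<2.01$ — are routine, and the bookkeeping of constants is carried out in \cite{KNS}.
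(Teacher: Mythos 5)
The paper does not give its own proof of this claim; it explicitly defers to Lemma~2.5 of \cite{KNS} and remarks that the statement "can be derived using the Berry--Esseen Theorem," which is precisely the route you take (normalize $\norm{\bf c}=1$, Berry--Esseen cdf comparison with error $O(\tau)$, layer-cake integration cut off at $T=\sqrt{2\log(1/\tau)}$, and a Gaussian tail estimate). Your constant bookkeeping checks out — including the key observation that the claim is vacuous unless $p\lesssim\log(1/\tau)$, which legitimizes the cutoff $T$ — so this is correct and essentially the same approach the paper invokes.
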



\section{Technical Overview}
In this section we describe our results and give a general outline of the sections that follow. 

\subsection{Algorithm for \subspace{$k$}{$p$}}
We formulate problem \subspace{$k$}{$p$} for points $a_{1}, a_{2}, \dotsc, a_{m} \in \R^{n}$ in terms of the orthogonal complement of the desired subspace $V$. Let $Z \in \R^{n \times (n-k)}$ be the matrix whose columns form an orthonormal basis for the orthogonal complement of $V$, then the distance of a point $a_{i}$ from $V$ can be written as $d(a_{i}, V) = \norm{a_i^T Z}_{2} = (a_i^T ZZ^T a_i)^{\nfrac{1}{2}}$. Note that $ZZ^{T}$ is a positive semidefinite
(p.s.d.) matrix of rank $n-k$, all of whose nonzero singular values are $1$ and whose singular vectors (the columns of $Z$) specify the (complement of the) subspace $V$.

A convex relaxation of \subspace{$k$}{$p$} is then obtained by optimizing over arbitrary positive semidefinite matrices $X$ and replacing the requirement that the matrix have rank $n-k$ by a condition on the trace of $X$
(see Figure \ref{fig:relaxation}). This is similar to the relaxations used in 
\cite{NRT,VVYZ}. The problem then reduces to giving a ``rounding algorithm" which reduces the 
rank of the matrix $X$ (which might be as large as $n$) to $n-k$, and achieves a good approximation of the objective value of the convex program.

In keeping with the intuition that the singular vectors of $ZZ^T$ span the orthogonal complement of $V$, our algorithm looks at the singular vectors of the matrix $X$ obtained by solving the convex relaxation. It then divides the singular vectors into $n-k$ ``bins", and constructs one vector for each bin by taking a 
random linear combination of vectors within each bin. 

Our algorithm described in Section \ref{sec:algo} achieves an approximation ratio of $\gamma_q \cdot (2 - \frac{1}{n-k})^{\nfrac{1}{2}}$ for \subspace{$k$}{$p$}, where $q = 2\cdot\lceil p/2 \rceil$. (See Theorem \ref{thm:algo}.)

We remark that the problem of obtaining low-rank solutions to a semidefinite program was also
considered by \cite{SYZ08}, and was addressed by simply taking random (chosen according to a Gaussian) 
linear combinations of the singular vectors of the relevant matrix. However, in their case, they were'
only interested in satisfying the constraints, with an error depending inversely on the rank parameter.
In our case, we require  a rank $n-k$ positive semidefinite matrix, all of whose eigenvalues are exactly 1.
Since the only constraint enforcing this is a constraint on the trace of the matrix, even a small
multiplicative error in satisfying the constraint can make some singular values quite small.
To resolve this, we proceed by dividing the singular vectors in various
bins and take Bernoulli linear combinations, do directly generate the orthogonal singular vectors. 

\subsection{A gap instance}
In Section \ref{sec:gap}, we show that the convex relaxation we use has an integrality gap, or more correctly ``rank gap'', of $\gamma_{p} (1-\epsilon)$, for any constant $\epsilon > 0$. Given any constant $\epsilon > 0$, we construct points $b_{1}, b_{2}, \dotsc, b_{m} \in \R^{n}$ such that the optimum for \subspace{$n-1$}{$p$} on these points
(a rank 1 p.s.d. matrix)
and the optimum for its corresponding convex relaxation (a rank $n$ p.s.d. matrix) are at least a
factor of $\gamma_{p} (1-\epsilon)$ apart. We first show such a gap for the continuous analog of
\subspace{$n-1$}{$p$} where the point set is the entire $\R^{n}$ equipped with Gaussian measure (Theorem \ref{thm:cont-gap}). We then discretize this example to get our final integrality gap construction (Theorem \ref{thm:discrete-gap}).

This also gives a gap of factor $\gamma_p (1-\epsilon)$ for \subspace{$k$}{$p$} for any super-constant $k = k(n)$, since an instance of \subspace{$n-1$}{$p$} in $\R^n$ can be trivially converted (by adding extra zero coordinates) to an instance of \subspace{$k$}{$p$} in $\R^{n'}$ with $k(n') = n-1$.

\subsection{Unique-Games hardness}
In Section \ref{sec:ug-hard}, we describe a reduction from \ULC to the problem of
approximating \subspace{$n-1$}{$p$} within a factor better than $\gamma_p$ (for a constant $p \geq 1$).
By a trivial reduction from \subspace{$n-1$}{$p$} to \subspace{$k$}{$p$} for any $k = n^{\Omega(1)}$,
this gives the hardness of approximating \subspace{$k$}{$p$} better than $\gamma_p$, assuming
the \UGC.

To understand the intuition for the reduction, let us consider the simpler
problem of testing whether a given function $f:\pmone^R \to \pmone$ is a
``dictator" i.e. $f(x_1, \ldots, x_R) = x_i$ for some $i \in [R]$, which
is a useful primitive in such reductions. The problem is to design an instance
$\cal I$ of \subspace{$n-1$}{$p$} and interpret the description of $f$ as a solution
to $\cal I$. The required property is that if $f$ is a
dictator then the corresponding subspace fits the points in $\cal I$ with small error.
On the other hand, if $f$ is ``far from being a dictator", the error is required to be larger by a factor of
$\gamma_p$.

In most reductions, $f$ is assumed to be described by its truth table. However, if we
want to interpret the input simply as the coordinates of a vector $z$, there is no way
to enforce that the coordinates be boolean. It turns out to be more convenient if we
require $f$ as a list of its Fourier coefficients which can be thought of as a vector with arbitrary
real numbers coordinates and norm 1 (since $\av[f^2] = 1$). Also, since we are only
interested in dictator functions, it is sufficient to ask for the ``level 1" Fourier coefficients
$\hat{f}(\{1\}), \ldots, \hat{f}(\{R\})$.

In particular, consider the input being described by $R$ real numbers $b_1, \ldots, b_R$
such that $\sum_i b_i^2 = 1$ and we think of it as describing the function
$f_b(x_1, \ldots, x_R) = b_1 \cdot x_1 + \cdots + b_R \cdot x_R$. We also think of
$b_1, \ldots, b_R$ as the normal to some $R-1$ dimensional subspace of
$\R^R$. Let the points be given by $(\nfrac{1}{2^{\nfrac{R}{p}}}) \cdot x$ for each
vector $\bfx \in \pmone^R$,
so that the objective of the subspace approximation problem is exactly $\norm{f_b}_p$.
If $f$ is a dictator, i.e., one of the $b_i$'s is 1 and others 0, then $\norm{f_b}_p  = 1$.
Also, if it is far from a dictator in the sense that $\max_ib_i \leq \tau$ for a
small constant $\tau$, then $\norm{f_b}_p \approx \gamma_p$ by Claim \ref{clm:clt}.

Translating this intuition to a reduction from \ULC turns out to be slightly technical
due to the fact that we need to consider one function for each vertex of \ULC and
all bounds on norms do not hold for individual functions but only on average. Similar technicalities
arise when working with the $\ell_p$  norm in \cite{KNS}
(though they specify functions by their truth tables).

\section{Approximation Algorithm via Convex Programming} \label{sec:algo}
To relax the minimization problem for \subspace{$k$}{$p$}  to a convex problem, we rewrite the distances $\norm{a_i^T Z}$ in the objective as $(a_i^T Z Z^T a_i)^{\nfrac{1}{2}}$. Noting that $ZZ^T$ is a positive semidefinite matrix of rank $n-k$, we get the following natural relaxation similar to the one used in \cite{NRT,VVYZ}.
\begin{figure}[h]
\hrule
\vline
\begin{minipage}[t]{0.49\linewidth}
\vspace{10 pt}
\centering {\underline{\textsf{Minimization Problem}}}
\begin{align*}
\text{minimize}\quad\quad &\left(\sum_{i = 1}^m \abs{a_i^T Z Z^T a_i}^{\nfrac{p}{2}}\right)^{\nfrac{1}{p}} \\
\text{subject to:}\quad\quad &\norm{Z^{(j)}} \geq 1 ~\forall j \in \{1, \ldots, n-k\}\\
  & \mydot{Z^{(j_1)}}{Z^{(j_2)}} = 0 ~\forall j_1 \neq j_2\\
  & Z \in \R^{n \times (n-k)}
\end{align*}
\end{minipage}
\vline
\begin{minipage}[t]{0.49\linewidth}
\vspace{10 pt}
\centering {\underline{\textsf{Convex Relaxation}}}
\begin{align*}
\text{minimize}\quad\quad &\left(\sum_{i = 1}^m \abs{a_i^T X a_i}^{\nfrac{p}{2}}\right)^{\nfrac{1}{p}} \\
\text{subject to:}\quad\quad &\tr(X) \geq n-k\\
  & I \succcurlyeq X \succcurlyeq 0 \\
  & X \in \R^{n \times n}
\end{align*}
\end{minipage}
\hfill\vline
\hrule
\caption{The problem $\subspace{k}{p}$ and its convex relaxation}
\label{fig:relaxation}
\end{figure}

Note that this relaxation removes the constraint on the rank and relaxes the constraint on the length of the
individual vectors $Z^{(j)}$ to the trace of entire matrix $X$. Also, the objective function is written as
$\inparen{\sum_i \abs{a_i^T X a_i}^{\nfrac{p}{2}}}^{\nfrac{1}{p}}$ which is not convex.
However, for solving the convex program, we can work with $\sum_i \abs{a_i^T X a_i}^{\nfrac{p}{2}}$, which is convex for $p \geq 2$.

In Figure \ref{fig:algo}, we give a ``rounding algorithm'' for the relaxation. Note that the problem
here is not really to round the solution to an integer solution as with most
convex relaxations, but instead to {\em reduce the rank} of the solution to the
program, while obtaining a good approximation of the objective.

\begin{figure}[htb]
\hrule
\vline
\begin{minipage}[t]{0.98\linewidth}
\vspace{10 pt}
\begin{center}
\begin{minipage}[h]{0.95\linewidth}
\underline{\textsf{Input}}: A matrix $X \in \R^{n \times n}$ satisfying
$I \succeq X \succeq 0$ and $\tr(X) \geq n-k$.

\begin{enumerate}
\item Express $X$ in terms of its singular vectors as $X = \sum_{t=1}^r \lambda_t \bfx_t \bfx_t^T$
where the vectors $\bfx_1, \ldots, \bfx_r$ form an orthonormal set
and $\lambda_1 \geq \lambda_2 \geq \cdots \geq \lambda_r \geq 0$.

\item Partition $[r]$ into $n-k$ subsets $S_1, \ldots, S_{n-k}$. Start with $S_1 = \cdots = S_{n-k} = \emptyset$.
Then for $t$ from 1 to $r$  do:
\begin{enumerate}
\item Find the set $S_{j}$ for which $\sum_{t' \in S_{j}} \lambda_{t'}$ is minimum.
\item Set $S_j := S_j \cup \{t\}$.
\end{enumerate}

\item Pick $r$ independent Bernoulli variables $b_1, \ldots, b_r \in_R \{-1,1\}$. For each $j \in [n-k]$,
let $\bfy_j \defeq \sum_{t \in S_j} b_t \cdot \sqrt{\lambda_t} \cdot \bfx_t $.

\item Output the matrix $Z \in \R^{n \times (n-k)}$ with $Z^{(j)} \defeq \dfrac{\bfy_j}{\norm{\bfy_j}}$.
\end{enumerate}

\vspace{5 pt}
\end{minipage}
\end{center}
\end{minipage}
\hfill \vline
\hrule
\caption{The rank reduction algorithm}
\label{fig:algo}
\end{figure}

We shall show the algorithm outputs a matrix $Z$ of rank $n-k$
which achieves an approximation ratio of $\gamma_p \cdot \sqrt{2 - \nfrac{1}{n-k}}$
in expectation, for even integers $p \geq 2$.
An approximation guarantee for other values of $p$ can be obtained via Jensen's inequality.
We state the dependence on $n-k$ precisely as we shall be interested in the case $n-k = 1$.
For notational convenience, we shall use $\nkfactor$ to denote the quantity
$\sqrt{2 - \nfrac{1}{n-k}}$ in the rest of this section.

It is clear that the columns of the matrix $Z$ given by the algorithm form an orthonormal set 
since they are all in the span of distinct eigenvectors of $X$, and are normalized to have length 1. However,
this assumes that the lengths of the vectors $\bfy_j$ are nonzero. Since a vector
$\bfy_j$ is a weighted sum of orthogonal vectors, $\norm{\bfy_j}^2 = \sum_{t \in S_j} \lambda_t$.
The following claim gives a lower bound on this quantity
which is also useful in bounding the approximation ratio.

\begin{claim} \label{clm:greedy}
Let $S_1, \ldots, S_{n-k}$ be the partition constructed by the algorithm in step 2. Then
\[ \forall j \in [n-k], ~~~\sum_{t \in S_j} \lambda_t ~~\geq~~ \frac{1}{\nkfactor^2}.\]
\end{claim}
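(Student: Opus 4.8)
The plan is to prove the slightly stronger statement that every bin sum is at least $\Lambda/(2m-1)$, where $m := n-k$ and $\Lambda := \tr(X) = \sum_{t=1}^{r}\lambda_t$; since the trace constraint gives $\Lambda \ge m$, this implies the claimed bound, because $\Lambda/(2m-1) \ge m/(2m-1) = 1/(2-1/m) = 1/\nkfactor^{2}$. Apart from $\Lambda \ge m$, the only other fact used is that $\lambda_t \le 1$ for every $t$, which holds since $X \preceq I$.

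Write $M$ for the \emph{largest} bin sum, attained by some bin $S_{j^*}$, and $\ell$ for the smallest. I would establish two lower bounds on $\ell$. The first is a one-line averaging bound: the $m$ bin sums total $\Lambda$ and each is at most $M$, so $\ell \ge \Lambda - (m-1)M$. The second comes from examining the greedy rule at the step when the \emph{last} element $t^*$ was placed into the heaviest bin $S_{j^*}$. At that step $S_{j^*}$ had the minimum partial sum, which equals $M - \lambda_{t^*}$ (its final value $M$ minus the contribution of $t^*$, the last thing ever added to it), so every other bin already had partial sum at least $M - \lambda_{t^*}$; partial sums only increase, so this survives to the end and $\ell \ge M - \lambda_{t^*}$. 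I would then check $\lambda_{t^*}\le M/2$: if $S_{j^*}$ has at least two elements, its partial sum just before $t^*$ was added contains some strictly earlier element $t' < t^*$, and since the $\lambda$'s are non-increasing this partial sum is at least $\lambda_{t'}\ge\lambda_{t^*}$, i.e.\ $M - \lambda_{t^*}\ge\lambda_{t^*}$; hence $\ell \ge M/2$ in this case. The degenerate case that $S_{j^*}$ is a singleton $\{t^*\}$ is handled separately: then $M = \lambda_{t^*}\le\lambda_1\le 1$, while also $M\ge\Lambda/m\ge 1$, so $M=1$, which (max equals average) forces every bin sum to equal $1$, and $1\ge 1/\nkfactor^{2}$.

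To conclude, combine the two bounds $\ell \ge \Lambda - (m-1)M$ and $\ell \ge M/2$. Regarded as a function of $M$, the right-hand side $\max\{\Lambda-(m-1)M,\ M/2\}$ is first decreasing and then increasing, so it attains its minimum at the crossover $M = 2\Lambda/(2m-1)$, where both terms equal $\Lambda/(2m-1)$. Therefore $\ell \ge \Lambda/(2m-1) \ge m/(2m-1) = 1/\nkfactor^{2}$, irrespective of the actual value of $M$, which is the assertion of the claim.

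The one genuinely non-obvious move is to pivot on the \emph{heaviest} bin (rather than the lightest) so that the greedy step yields a useful lower bound on $\ell$, and then to marry this with the averaging bound through the elementary one-variable optimization; the inequality $\lambda_{t^*}\le M/2$ and the singleton corner case are minor technicalities, and everything else is routine bookkeeping about the greedy insertion rule.
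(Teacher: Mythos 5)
Your proof is correct, and it takes a genuinely different route from the paper's. The paper pivots on the \emph{lightest} bin $S_{j_0}$: writing $s^*$ for its sum and $Q = \{j_0\} \cup \{j : |S_j| > 1\}$, it shows via the greedy rule (plus the non-increasing ordering of the $\lambda_t$) that every non-singleton bin other than $j_0$ has sum at most $2s^*$, while each singleton bin has sum at most $1$; summing and using $\sum_t \lambda_t \geq n-k$ yields $s^* \geq 1/(2 - 1/|Q|) \geq 1/\nkfactor^2$. You instead pivot on the \emph{heaviest} bin: the greedy step at the moment the last element $t^*$ entered $S_{j^*}$ gives $\ell \geq M - \lambda_{t^*} \geq M/2$ (handling the singleton corner case for $S_{j^*}$ separately), and you pair this with the purely combinatorial averaging bound $\ell \geq \Lambda - (m-1)M$, then eliminate the unknown $M$ by taking the max of the two bounds at the crossover. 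The two proofs share the core observation (a newly inserted element's weight cannot exceed the current minimum partial sum, thanks to the sorted order), but differ in bookkeeping: the paper's version distinguishes singleton from non-singleton bins everywhere, which is slightly more work but yields the stronger bound $1/(2 - 1/|Q|)$ when there are many singletons; your minimax argument only needs to worry about singletons once (for $S_{j^*}$), and gives the clean bound $\ell \geq \Lambda/(2m-1)$, which is stronger than the claim whenever $\tr(X)$ exceeds $n-k$ strictly. Either refinement is more than the claim asks for, and both specialize to exactly $1/\nkfactor^2$.
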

\begin{proof} Let ${j_0} \defeq \argmin_j \inbrace{\sum_{t \in S_j} \lambda_t}$ and let
$s^{*} \defeq \sum_{t \in S_{j_0}} \lambda_t$. Let $Q \defeq \{j_0\} \cup \{j ~|~ |S_j| > 1\}$.
Note that the algorithm ensures that $|S_j| > 0$ for all $j$ but in $T$ we discard the
singleton sets. We will show that $s^{*} \geq 1/\inparen{2 - \nfrac{1}{|Q|}}$, which will
prove the claim since $|Q| \leq n-k$.

We argue that for each $j \in Q$, $j \neq j_0$, $\sum_{t \in S_j} \lambda_t ~\leq~ 2s^{*}$.
To see this, let $t_j$ be the maximal index in $S_j$. At step $t = t_j$, $t_j$ was added to
set $S_j$ and not to the set $S_{j_0}$. Hence,
\[\sum_{t \in S_j, t< t_j} \lambda_t ~\leq~ \sum_{t \in S_{j_0}, t < t_j} \lambda_t ~\leq~ s^{*}.\]
Also, there exists at least one $t_0 \in S_{j_0}$ such that $t_0 < t_j$.
This is because $S_j$ was non-empty at step $t_j$ (otherwise it would be a singleton).
But then $\lambda_{t_j} \leq \lambda_{t_0} \leq s^{*}$ and, hence, $\sum_{t \in S_j} \lambda_t \leq 2s^*$.

Finally, we note that for each $j \notin Q$, $S_j$ contains exactly one element $t$, the eigenvalue
$\lambda_t$ corresponding to which is at most 1. Thus,
\[ (|Q|-1)\cdot 2s^* + s^* + (n-k-|Q|) \cdot 1 ~~\geq~~ \sum_{t \in [r]} \lambda_t ~~\geq~~ n-k,\]
which completes the proof.
\end{proof}

The following lemma proves the required approximation guarantee for the expected $p^{th}$ moment
of the distance a single  point $a_i$ from the orthogonal complement of the column span of $Z$.

\begin{lemma} \label{lemma:peven}
Let $X$ be the solution of the convex relaxation and let $Z$ be the matrix
returned by the algorithm. Also, let $p$ be even. Then, for each $i \in [m]$
\begin{equation*}
 \av_{Z} \left[ \norm{a_i^T Z}_2^p \right] ~~\leq~~
\gamma_p^p \cdot \nkfactor^p \cdot \inparen{a_i^TXa_i}^{\nfrac{p}{2}}.
\end{equation*}
\end{lemma}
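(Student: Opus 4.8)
The plan is to compute $\av_Z[\norm{a_i^T Z}_2^p]$ directly by expanding in terms of the Bernoulli variables $b_1,\ldots,b_r$, using that $p$ is even so that everything is a genuine polynomial identity (no absolute values to worry about). Write $c = a_i$ for brevity. First I would note that $a_i^T Z = (a_i^T Z^{(1)},\ldots,a_i^T Z^{(n-k)})$, so $\norm{a_i^T Z}_2^p = \bigl(\sum_{j=1}^{n-k} (a_i^T Z^{(j)})^2\bigr)^{p/2}$, and $a_i^T Z^{(j)} = a_i^T \bfy_j / \norm{\bfy_j}$ where $\norm{\bfy_j}^2 = \sum_{t\in S_j}\lambda_t =: s_j$. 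The key observation is that $a_i^T \bfy_j = \sum_{t\in S_j} b_t \sqrt{\lambda_t}\, (a_i^T\bfx_t)$ is a Bernoulli sum with coefficient vector of squared length $\sum_{t\in S_j}\lambda_t (a_i^T\bfx_t)^2$. Because the $b_t$'s in different bins are independent, I would first condition appropriately or, more cleanly, use convexity of $u\mapsto u^{p/2}$ for $p\geq 2$ to pull out the outer power.

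The main steps in order: (1) Apply Jensen/convexity to $(\cdot)^{p/2}$ — since $\norm{a_i^T Z}_2^p = \bigl(\sum_j (a_i^T Z^{(j)})^2\bigr)^{p/2}$ and the weights don't sum to one, I would instead observe $\bigl(\sum_{j} w_j\bigr)^{p/2} \le (n-k)^{p/2 - 1}\sum_j w_j^{p/2}$ is the wrong direction; the right move is to keep the sum intact and bound via Claim~\ref{clm:upper-gaussian} after taking expectations. Concretely: (2) take $\av_{b}$ inside using that for even $p$, $\bigl(\sum_j v_j^2\bigr)^{p/2}$ expands as a sum of monomials $\prod v_{j}^{2\alpha_j}$, each a product over disjoint bins, so independence across bins lets the expectation factor — but this is messy. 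The cleaner route, which I expect the authors take: bound $\norm{a_i^T Z}_2^2 = \sum_j (a_i^T\bfy_j)^2/s_j \le \nkfactor^2 \sum_j (a_i^T\bfy_j)^2$ using Claim~\ref{clm:greedy} (since $1/s_j \le \nkfactor^2$), and then note $\sum_j (a_i^T\bfy_j)^2$ is itself $(a_i^T \bfw)^2$-like where $\bfw = \sum_t b_t\sqrt{\lambda_t}\bfx_t$ — indeed $\sum_{j=1}^{n-k}(a_i^T\bfy_j)^2$, since the $S_j$ partition $[r]$, equals $\bigl(\sum_{t=1}^r b_t \sqrt{\lambda_t}(a_i^T\bfx_t)\bigr)^2$ only if there were one bin; in general it is $\sum_j(\sum_{t\in S_j} b_t\sqrt{\lambda_t}(a_i^T\bfx_t))^2$.

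So the actual computation: (3) $\av_Z[\norm{a_i^T Z}_2^p] \le \nkfactor^p\, \av_b\bigl[\bigl(\sum_{j}(\sum_{t\in S_j} b_t\sqrt{\lambda_t}(a_i^T\bfx_t))^2\bigr)^{p/2}\bigr]$. Now combine all the $b_t\sqrt{\lambda_t}(a_i^T\bfx_t)$ into a single Bernoulli sum: since squaring and summing over a partition, then raising to $p/2$, is dominated (for even $p$, by a monomial-expansion / convexity argument, or by the triangle inequality in $\ell_{p/2}$ after taking expectations) by the moment of the single combined sum $\sum_{t=1}^r b_t \sqrt{\lambda_t}(a_i^T \bfx_t)$, apply Claim~\ref{clm:upper-gaussian} with coefficient vector $c_t = \sqrt{\lambda_t}(a_i^T\bfx_t)$, giving $\norm{\bf c}^2 = \sum_t \lambda_t (a_i^T\bfx_t)^2 = a_i^T X a_i$. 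This yields $\av_b[|\sum_t c_t b_t|^p] \le \gamma_p^p (a_i^T X a_i)^{p/2}$, and combined with (3) we get exactly the claimed bound.

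The step I expect to be the main obstacle is justifying that $\av_b\bigl[\bigl(\sum_j(\sum_{t\in S_j} c_t b_t)^2\bigr)^{p/2}\bigr] \le \av_b\bigl[|\sum_t c_t b_t|^p\bigr]$ — i.e., that refining a single Bernoulli sum into a sum-of-squares over a partition only decreases the $p$-th moment. For even $p$ this should follow from expanding both sides into monomials in the $b_t$ and matching terms: on the right, cross terms $b_tb_{t'}$ for $t,t'$ in different bins survive and contribute nonnegatively (their expectations are nonnegative because... hmm, actually $\av[b_tb_{t'}]=0$, so one must be careful), so more plausibly the argument is: for even $p$, $\av_b[(\sum_t c_t b_t)^p]$ expands with all odd-degree monomials vanishing and the surviving even monomials having coefficient equal to a product of squares times a positive multinomial coefficient, and this dominates term-by-term the expansion of the partitioned version. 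I would verify this monomial domination carefully — it is the crux. Alternatively the authors may simply apply Claim~\ref{clm:upper-gaussian} bin-by-bin and then recombine using a convexity/power-mean inequality on the $n-k$ bin-contributions; either way the bookkeeping with the $\nkfactor$ factor from Claim~\ref{clm:greedy} and the consolidation of coefficient vectors into $a_i^T X a_i$ is the heart of the proof.
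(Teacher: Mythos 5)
Your proposed ``cleaner route'' is correct, but it is \emph{not} the route the paper takes, and the two differ in a meaningful way. You first pull out the normalization pointwise via Claim~\ref{clm:greedy}, writing $\norm{a_i^T Z}_2^p \leq \nkfactor^p \bigl(\sum_j (a_i^T\bfy_j)^2\bigr)^{\nfrac{p}{2}}$, and then aim to coalesce all the bins into one Bernoulli sum so that Claim~\ref{clm:upper-gaussian} is invoked only once. The step you flag as the crux --- that for independent symmetric $V_j = \sum_{t\in S_j} c_t b_t$ and even $p$ one has $\av\bigl[\bigl(\sum_j V_j^2\bigr)^{\nfrac{p}{2}}\bigr] \leq \av\bigl[\bigl|\sum_j V_j\bigr|^p\bigr]$ --- is in fact true and proves exactly as you suspect: expanding both sides and using independence and symmetry, only even multidegrees survive, and the comparison reduces to the multinomial coefficient inequality $\binom{p/2}{\alpha_1,\ldots,\alpha_{n-k}} \leq \binom{p}{2\alpha_1,\ldots,2\alpha_{n-k}}$ for $\sum_j \alpha_j = p/2$, which holds by the ``doubling'' injection mapping a partition of $[p/2]$ into blocks of sizes $\alpha_j$ to the partition of $[p]$ into blocks $B_j \cup (B_j + p/2)$. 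So your route closes. The paper instead keeps the normalization inside: it uses the multinomial expansion of $\bigl(\sum_j W_j^2\bigr)^{\nfrac{p}{2}}$ and independence to factor the expectation, applies Khintchine (Claim~\ref{clm:upper-gaussian}) \emph{per bin} with varying exponent $2p_j$ (using the monotonicity $\gamma_{2p_j}\leq\gamma_p$), yielding $\av[W_j^{2p_j}] \leq \gamma_p^{2p_j}(D_j/\Lambda_j)^{p_j}$, then resums via the multinomial theorem to get $\gamma_p^p\bigl(\sum_j D_j/\Lambda_j\bigr)^{\nfrac{p}{2}}$, and only at the last step bounds $1/\Lambda_j\leq\nkfactor^2$. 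This avoids any need for a refinement-decreases-moments lemma, at the price of tracking the per-bin exponents $2p_j$. Your remark that the authors might ``recombine using a convexity/power-mean inequality'' is not quite what happens --- the recombination is just the multinomial theorem run in reverse --- but the overall alternative you gestured at is exactly the paper's argument. Either route is valid; yours front-loads the $\nkfactor$ factor and requires one extra combinatorial lemma, the paper's defers it and interleaves Khintchine with the expansion.
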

\begin{proof}
We can expand $\norm{a_i^T Z}$, using $W_j$ to denote $\mydot{a_i}{Z^{(j)}}$, as
\begin{equation*}
\av_Z \left[\norm{a_i^T Z}_2^p\right]
~=~  \av_Z \left[ \left(\sum_{j=1}^{n-k} \mydot{a_i}{Z^{(j)}}^2\right)^{\nfrac{p}{2}}\right]
~=~  \av\left[ \left(\sum_{j=1}^{n-k} W_j^2\right)^{\nfrac{p}{2}} \right]
\end{equation*}
Note that the $W_j$-s are independent random variables since each $W_j$
only depends on $b_t$ such that $t \in S_j$, and the sets are disjoint.
Using the multinomial expansion and the fact that $p$ is even, the above can be written
as
\begin{eqnarray*}
 \av\left[ \left(\sum_{j=1}^{n-k} W_j^2\right)^{\nfrac{p}{2}} \right]
&=& \sum_{p_1, \ldots, p_{n-k}} \binom{p/2}{p_1, \ldots, p_{n-k}} \av\left[\prod_j W_j^{2p_j}\right]\\
&=& \sum_{p_1, \ldots, p_{n-k}} \binom{p/2}{p_1, \ldots, p_{n-k}} \left(\prod_j \av [W_j^{2p_j}]\right).
\end{eqnarray*}

The following claim then finishes the proof.
\begin{claim}
 $\av\left[W_j^{2p_j}\right] ~~\leq~~
\gamma_{p}^{2p_j}  \cdot \inparen{\dfrac{\sum_{t \in S_j} \lambda_t \mydot{a_i}{\bfx_t}^2}{\sum_{t \in S_j} \lambda_t}}^{p_j}.$
\end{claim}
\begin{proof}
The proof follows an application of upper bound on a sum Bernoulli variables
derived in Claim \ref{clm:upper-gaussian}. We expand $\av[W_j^{2p_j}]$ as
\begin{equation*}
 \av\left[ W_j^{2p_j}\right]
~=~ \av \insquar{\inparen{\frac{\mydot{a_i}{\sum_{t \in S_j} b_t \cdot  \sqrt{\lambda_t}\cdot  \bfx_t}}
{\norm{\sum_{t \in S_j} b_t \cdot \sqrt{\lambda_t} \cdot \bfx_t}}}^{2p_j}}
~=~  \frac{\av\insquar{ \inparen{\sum_{t \in S_j}b_t \cdot \sqrt{\lambda_t}\cdot \mydot{a_i}{\bfx_t}}^{2p_j} }}{\inparen{\sum_{t \in S_j} \lambda_t}^{p_j}}.
\end{equation*}
Claim \ref{clm:upper-gaussian} gives that
$\av\insquar{ \inparen{\sum_{t \in S_j}b_t \cdot \sqrt{\lambda_t} \cdot\mydot{a_i}{\bfx_t}}^{2p_j} }
~~\leq~~ \gamma_{2p_j}^{2p_j} \cdot \inparen{\sum_{t \in S_j} \lambda_t \mydot{a_i}{\bfx_t}^2}^{p_j}$
and noting that $\gamma_{2p_j} \leq \gamma_p$ (since $2p_j \leq p$) proves the claim.
\end{proof}

For each $j$, let $D_j$ denote $\sum_{t \in S_j} \lambda_t \mydot{a_i}{\bfx_t}^2$
and let $\Lambda_j$ denote $\sum_{t \in S_j} \lambda_t$. Using the above claim we get that
\begin{equation*}
\av_Z \left[\norm{a_i^T Z}_2^p\right]
 ~\leq~ \sum_{p_1, \ldots, p_{n-k}} \binom{p/2}{p_1, \ldots, p_{n-k}} \cdot
 \prod_{j} \inparen{\frac{D_j}{\Lambda_j}}^{p_j} \cdot \gamma_p^{p}
 ~=~ \inparen{\sum_j \frac{D_j}{\Lambda_j}}^{\nfrac{p}{2}} \cdot \gamma_p^p.
 \end{equation*}

Claim \ref{clm:greedy} gives that $\nfrac{1}{\Lambda_j} \leq \nkfactor^2$. Also, we have that
$\sum_j D_j = \sum_t \lambda_t \mydot{a_i}{\bfx_t}^2 = a_i^T X a_i$.
Combining these gives
$\av_Z \left[\norm{a_i^T Z}_2^p\right] ~\leq~ \gamma_p^p \cdot \nkfactor^{p} \cdot \inparen{a_i^T X a_i}^{\nfrac{p}{2}}$
which proves the lemma.
\end{proof}

An approximation guarantee for other values of $p$ can be obtained via a standard application of Jensen's Inequality. We state the dependence on $n-k$ precisely as we shall be interested in the case $n-k = 1$ in the later sections. Notice that the approximation factor is $\gamma_{q}$, where $q = 2 \cdot \lceil p/2 \rceil$, in the case $n-k=1$, and thus matches the integrality gap and unique-games hardness that appear in the later sections.
\begin{theorem} \label{thm:algo}
Let $X$ be the solution of the convex relaxation and let $Z$ be the matrix
returned by the algorithm. Let $p \geq 1$ and let $q = 2 \cdot \lceil  \nfrac{p}{2} \rceil$
be the smallest even integer such that $q \geq p$. Then,
\[\av_{Z} \left[\left(\sum_{i=1}^m \norm{a_i^T Z}_2^p \right)^{\nfrac{1}{p}}\right] ~~\leq~~
\gamma_q \cdot \sqrt{2 - (\nfrac{1}{n-k})} \cdot \left(\sum_{i=1}^m \inparen{a_i^T X a_i}^{\nfrac{p}{2}}\right)^{\nfrac{1}{p}}.\]
\end{theorem}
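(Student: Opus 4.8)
The plan is to bootstrap from Lemma \ref{lemma:peven}, which already handles the expected $p$-th moment of a single point's distance when $p$ is an even integer, to the general statement for all $p \geq 1$ by applying Jensen's inequality twice. First I would handle the case $p = q$, i.e.\ $p$ itself even. Summing the bound of Lemma \ref{lemma:peven} over $i \in [m]$ gives
\[
\av_Z\left[\sum_{i=1}^m \norm{a_i^T Z}_2^p\right] ~\leq~ \gamma_p^p \cdot \nkfactor^p \cdot \sum_{i=1}^m (a_i^T X a_i)^{\nfrac p2}.
\]
Then I would take $p$-th roots of both sides; since $t \mapsto t^{1/p}$ is concave on $[0,\infty)$, Jensen's inequality lets me pull the expectation inside, yielding
\[
\av_Z\left[\left(\sum_{i=1}^m \norm{a_i^T Z}_2^p\right)^{\nfrac 1p}\right] ~\leq~ \left(\av_Z\left[\sum_{i=1}^m \norm{a_i^T Z}_2^p\right]\right)^{\nfrac 1p} ~\leq~ \gamma_p \cdot \nkfactor \cdot \left(\sum_{i=1}^m (a_i^T X a_i)^{\nfrac p2}\right)^{\nfrac 1p},
\]
which is exactly the claimed inequality in the case $q = p$.

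For general $p \geq 1$ with $q = 2\lceil p/2\rceil > p$, the idea is to compare the $\ell_p$ norm of the distance vector with its $\ell_q$ norm and invoke the already-proven even case with exponent $q$. The subtlety is that $\ell_q \le \ell_p$ for $q \ge p$ on a fixed vector (monotonicity of $\ell_s$ norms in $s$), so I cannot directly upper bound the $\ell_p$ objective by the $\ell_q$ objective on the \emph{algorithm's} side. Instead I would work per-point: Lemma \ref{lemma:peven} with exponent $q$ gives $\av_Z[\norm{a_i^T Z}_2^q] \le \gamma_q^q \nkfactor^q (a_i^T X a_i)^{q/2}$, and by Jensen (concavity of $t \mapsto t^{p/q}$, since $p/q \le 1$) this implies
\[
\av_Z\left[\norm{a_i^T Z}_2^p\right] ~=~ \av_Z\left[\left(\norm{a_i^T Z}_2^q\right)^{p/q}\right] ~\leq~ \left(\av_Z\left[\norm{a_i^T Z}_2^q\right]\right)^{p/q} ~\leq~ \gamma_q^p \cdot \nkfactor^p \cdot (a_i^T X a_i)^{p/2}.
\]
Thus the per-point bound of Lemma \ref{lemma:peven} holds \emph{verbatim} with $\gamma_p$ replaced by $\gamma_q$, for every real $p \geq 1$. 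Now I sum over $i$, take $p$-th roots, and apply Jensen one more time (concavity of $t \mapsto t^{1/p}$) exactly as in the even case, obtaining the stated theorem with constant $\gamma_q \cdot \sqrt{2 - 1/(n-k)}$.

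The only real content beyond Lemma \ref{lemma:peven} is the bookkeeping of which way the Jensen inequalities point, and the observation that $\gamma_q \ge \gamma_p$ (immediate since $s \mapsto \gamma_s$ is nondecreasing, so even though we are \emph{enlarging} the exponent we are only weakening the constant, which is fine for an upper bound). I expect no genuine obstacle here: every inequality is an application of Jensen in the correct (concave) direction, and the main step — the nontrivial Khintchine-type moment computation — is entirely contained in the already-established Lemma \ref{lemma:peven} and Claim \ref{clm:greedy}. If one wanted to be careful, the one place to double-check is that taking expectations and $p$-th roots commutes as claimed even when some $\norm{\bfy_j}$ could in principle vanish; but Claim \ref{clm:greedy} guarantees $\norm{\bfy_j}^2 = \sum_{t \in S_j}\lambda_t \ge 1/\nkfactor^2 > 0$, so the random matrix $Z$ is always well-defined and the distances are honest nonnegative random variables, making all the Jensen steps legitimate.
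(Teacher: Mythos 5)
Your proposal is correct and follows essentially the same route as the paper's proof: reduce to Lemma~\ref{lemma:peven} with the even exponent $q$, then chain two applications of Jensen's inequality (one for the concavity of $t \mapsto t^{p/q}$ per point, one for $t \mapsto t^{1/p}$ on the outer sum). The only difference is purely expository — you handle the $p=q$ case separately and note $\gamma_q \ge \gamma_p$ and the nonvanishing of $\norm{\bfy_j}$, neither of which is needed for the bound but neither of which introduces any error.
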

\begin{proof} (Proof of Theorem \ref{thm:algo})
By the concavity of the function $f(u) = u^{\nfrac{1}{p}}$ and Jensen's Inequality
we have that
\begin{equation*}
 \av_{Z} \left[\left(\sum_{i=1}^m \norm{a_i^T Z}_2^p \right)^{\nfrac{1}{p}}\right] ~~\leq~~
\left(\av_{Z} \left[\sum_{i=1}^m \norm{a_i^T Z}_2^p\right]\right)^{\nfrac{1}{p}},
\end{equation*}
and by linearity it suffices to consider a single term of the summation. Another
application of Jensen's (using $p \leq q$) and Lemma \ref{lemma:peven} give that
\begin{equation*}
 \av_{Z} \insquar{\norm{a_i^T Z}_2^p} ~=~ \av_Z \insquar{\inparen{\norm{a_i^T Z}_2^q}^{\nfrac{p}{q}}}
~\leq~ \inparen{\av_Z \insquar{\norm{a_i^T Z}_2^q}}^{\nfrac{p}{q}}
~\leq~ \gamma_q^p \cdot \nkfactor^p \cdot  \inparen{a_i^T X a_i}^{\nfrac{p}{2}}
\end{equation*}
which completes the proof of the theorem.
\end{proof}

\begin{remark}
Our results are stated in terms of the expected approximation ratio achieved by the algorithm.
However, one can get arbitrarily close to this ratio with high probability, simply by considering
few independent runs of the algorithm and picking the best solution. In particular, one can
achieve an approximation guarantee $(1+\epsilon)\cdot\gamma_q \cdot \sqrt{2 - (\nfrac{1}{n-k})}$
with probability $1-p_e$, by using $O(\nfrac{1}{\epsilon} \cdot \log(\nfrac{1}{p_e}))$ runs.
\end{remark}


\section{A Gap  Instance for the Convex Relaxation}\label{sec:gap}

Here we describe an instance of \subspace{$n-1$}{$p$} such that the value of any valid solution (which is of rank $1$) is at least $\gamma_p$ times the value of the convex relaxation. Note that approximation ratio of the algorithm for the case $n-k=1$ (and even $p$) is exactly $\gamma_p$ and hence this shows that our analysis is optimal for this case.

This also gives a gap of factor $\gamma_p$ for \subspace{$k$}{$p$}
for any super-constant $k = k(n)$, since an instance of \subspace{$n-1$}{$p$}
in $\R^n$ can be trivially converted (by adding extra zero coordinates) to
an instance of \subspace{$k$}{$p$} in $\R^{n'}$ with $k(n') = n-1$.

\subsection{A continuous gap instance}
Recall that an instance of \subspace{$n-1$}{$p$} can be expressed as $\min_{\norm{z}_{2} = 1} \norm{Az}_{p}$ for $A \in \R^{n \times m}$, where $a_{1}, a_{2}, \dotsc, a_{m}$ form the rows of $A$. We consider a continuous generalization of this, where instead of points, we are given a probability distribution on $\R^n$ with density function $\mu(\cdot)$,
and objective is:
\[
\min_{\norm{z}_{2} = 1} \left(\int_{a \in \R^n} \abs{\mydot{a}{z}}^{p} \mu(a) da\right)^{\nfrac{1}{p}}.
\]
The corresponding convex relaxation is
\[
\min_{\substack{I \succcurlyeq X \succcurlyeq 0 \\ \tr(X)=1}} \left(\int_{a \in \R^n} \left(a^{T}Xa\right)^{\nfrac{p}{2}} \mu(a) da\right)^{\nfrac{1}{p}}.
\]
We first show that Gaussian measure on $\R^n$, i.e., i.i.d. coordinates from $N(0,1)$, gives a gap instance for the above problem.
\begin{theorem} \label{thm:cont-gap}
Given $\eta > 0$, there exists $n_0 \in \Z$ such that for all $n \geq n_0$
if $\mu$ is the Gaussian  density function on $\R^n$ with each coordinate
having mean 0 and variance 1, then
\begin{equation*}
\min_{\norm{z}_{2} = 1} \left(\int_{a \in \R^n} \abs{\mydot{a}{z}}^{p} \mu(a) da\right)^{\nfrac{1}{p}}
~\geq~ \gamma_p(1-\eta) \cdot
\min_{\substack{I \succcurlyeq X \succcurlyeq 0 \\ \tr(X)=1}} \left(\int_{a \in \R^n} \left(a^{T}Xa\right)^{\nfrac{p}{2}} \mu(a) da\right)^{\nfrac{1}{p}}.
\end{equation*}
\end{theorem}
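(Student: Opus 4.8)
The plan is to compute both sides of the inequality explicitly for the Gaussian measure and show their ratio tends to $\gamma_p$ as $n \to \infty$. The left-hand side is actually easy to evaluate exactly: if $\mu$ is the standard Gaussian on $\R^n$ and $\norm{z}_2 = 1$, then $\mydot{a}{z}$ is a one-dimensional standard Gaussian, so $\int_{\R^n} \abs{\mydot{a}{z}}^p \mu(a)\,da = \gamma_p^p$ for \emph{every} unit vector $z$. Hence the minimum on the left is exactly $\gamma_p$. So the theorem reduces to showing that the convex relaxation value on the right is at most $(1 + o(1))$ as $n \to \infty$; equivalently, there is a feasible $X$ with $I \succeq X \succeq 0$, $\tr(X) = 1$, achieving $\left(\int_{\R^n}(a^TXa)^{p/2}\mu(a)\,da\right)^{1/p} \le 1 + \eta$ for $n$ large.

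The natural candidate is the scaled identity $X = \tfrac{1}{n} I$, which is clearly feasible (eigenvalues $1/n \le 1$, trace $1$). Then $a^T X a = \tfrac{1}{n}\norm{a}_2^2 = \tfrac{1}{n}\sum_{i=1}^n a_i^2$, which is $\tfrac{1}{n}$ times a chi-squared random variable with $n$ degrees of freedom. By the law of large numbers this concentrates sharply around $1$, so $(a^T X a)^{p/2}$ concentrates around $1$ and its expectation should converge to $1$. Thus the relaxation value with this $X$ tends to $1$, giving the ratio $\gamma_p / 1 = \gamma_p$ in the limit, and for $n \ge n_0$ we get the factor $\gamma_p(1-\eta)$ after rearranging (the $(1-\eta)$ absorbs the $o(1)$ error from the other side).

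The one technical point that needs care — and the \textbf{main obstacle} — is justifying that $\Exp\left[\left(\tfrac{1}{n}\chi^2_n\right)^{p/2}\right] \to 1$ as $n \to \infty$. Concentration gives that this random variable is close to $1$ with high probability, but since $p/2$ may exceed $1$ one must control the contribution of the upper tail to the expectation. This is routine: either use the explicit formula $\Exp[(\chi^2_n)^{p/2}] = 2^{p/2}\Gamma(\tfrac{n+p}{2})/\Gamma(\tfrac{n}{2})$ and the asymptotics $\Gamma(\tfrac{n+p}{2})/\Gamma(\tfrac{n}{2}) \sim (n/2)^{p/2}$ (Stirling), which directly gives $\Exp[(\tfrac{1}{n}\chi^2_n)^{p/2}] \to 1$; or bound the tail using standard chi-squared deviation inequalities together with the fact that all moments of $\chi^2_n/n$ are $O(1)$. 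Either way the estimate is standard. I would also remark that one cannot take $X$ to be a rank-$1$ projection (that would just recover the left-hand side), so the gap is genuinely a ``rank gap'': spreading the trace $1$ uniformly across all $n$ directions makes $a^TXa$ concentrate and kills the $\gamma_p$ fluctuation factor that a single direction incurs.

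To assemble the proof: (1) observe the left side equals $\gamma_p$ exactly by rotational invariance of the Gaussian; (2) exhibit $X = \tfrac1n I$ as a feasible point for the relaxation; (3) compute $\Exp[(a^TXa)^{p/2}] = \Exp[(\tfrac1n\chi^2_n)^{p/2}]$ and show it is at most $(1-\eta)^{-p}$ for $n \ge n_0(\eta)$ via the Gamma-function asymptotics above; (4) conclude $\min_X (\cdots)^{1/p} \le (1-\eta)^{-1}$, so the left side is $\ge \gamma_p(1-\eta)$ times the relaxation value, as claimed.
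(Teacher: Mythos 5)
Your proposal is correct and follows essentially the same route as the paper: both observe that rotational invariance makes the left side exactly $\gamma_p$, take $X = \tfrac1n I$ in the relaxation, and evaluate the resulting moment via $2^{p/2}\Gamma(\tfrac{n+p}{2})/\Gamma(\tfrac{n}{2})$ (the paper via spherical coordinates, you via the chi-squared moment formula, which is the same computation) and then apply Stirling-type asymptotics. No meaningful difference.
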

\begin{proof}
We first consider the value of the LHS. By the rotational invariance of the Gaussian
measure, the value is equal for all $z$ and we can restrict ourselves to $z = e_1$.
\begin{eqnarray*}
\min_{\norm{z}_{2} = 1} \left(\int_{a \in \R^n} \abs{\mydot{a}{z}}^{p} \mu(a) da\right)^{\nfrac{1}{p}}
&=& \left(\int_{\R^{n}} \abs{\mydot{a}{e_{1}}}^{p} \mu(a) da\right)^{\nfrac{1}{p}} \\
&=& \left(\int_{\R^{n}} \abs{a_{1}}^{p} \frac{e^{-\nfrac{\norm{a}^{2}}{2}}}{(2\pi)^{\nfrac{d}{2}}} da_{1} \cdot da_{2} \cdots \cdot da_{n}\right)^{\nfrac{1}{p}} \\
&=& \left(\int_{-\infty}^{\infty} \abs{a_{1}}^{p} \frac{e^{-\nfrac{a_{1}^{2}}{2}}}{\sqrt{2\pi}} da_{1} \cdot \prod_{j=2}^{n} \left(\frac{1}{\sqrt{2\pi}} \int_{-\infty}^{\infty} e^{-\nfrac{a_{j}^{2}}{2}} da_{j}\right)\right)^{\nfrac{1}{p}} \\
&=& \gamma_{p}.
\end{eqnarray*}
In comparison, the optimum of the convex relaxation can be upper bounded by
using the matrix $X = \nfrac1n\cdot  I$.

\begin{eqnarray*}
\min_{\substack{I \succcurlyeq X \succcurlyeq 0 \\ \tr(X)=1}} \left(\int_{a \in \R^n} \left(a^{T}Xa\right)^{\nfrac{p}{2}} \mu(a) da\right)^{\nfrac{1}{p}}
&\leq& \frac{1}{\sqrt{n}} \left(\int_{\R^{n}} \norm{a}^{p} \frac{e^{-\nfrac{\norm{a}^{2}}{2}}}{(2\pi)^{\nfrac{n}{2}}} da_{1} \cdot da_{2} \cdots \cdot da_{n}\right)^{\nfrac{1}{p}}\\
&=& \frac{1}{\sqrt{n}} \left(\int_{\omega \in \sph^{n-1}} \int_{r=0}^{\infty} r^{p} \frac{e^{-\nfrac{r^{2}}{2}}}{(2\pi)^{\nfrac{n}{2}}} r^{n-1} dr\cdot  d\omega\right)^{\nfrac{1}{p}}\\
&=& \frac{1}{\sqrt{n}} \left(\frac{1}{(2\pi)^{\nfrac{(n-1)}{2}}} \int_{0}^{\infty} r^{n+p-1} \frac{e^{-\nfrac{r^{2}}{2}}}{\sqrt{2\pi}} dr \cdot  \int_{\omega \in \sph^{n-1}} d\omega\right)^{\nfrac{1}{p}} \\
&=& \frac{1}{\sqrt{n}} \left(\frac{1}{(2\pi)^{\nfrac{(n-1)}{2}}} \cdot \frac{2^{(n+p-1)/2}  \Gamma(\frac{n+p}{2})}{2\sqrt{\pi}} \cdot \frac{2 \pi^{\nfrac{n}{2}}}{\Gamma(\nfrac{n}{2})}\right)^{\nfrac{1}{p}}\\
&=& \inparen{\inparen{\frac{2}{n}}^{\nfrac{p}{2}} \cdot \frac{\Gamma(\frac{n+p}{2})}{\Gamma(\nfrac{n}{2})}}^{\nfrac{1}{p}}
~\leq~ \inparen{1+\frac{O(p)}{n}}^{\nfrac{1}{2}}
\end{eqnarray*}
where the third equality used that
$\int_{\omega \in \sph^{n-1}} d\omega = \text{area}(\sph^{n-1}) = \frac{2\pi^{\nfrac{n}{2}}}{\Gamma(\nfrac{n}{2})}$,
and $\int_0^{\infty} r^{n+p-1} \frac{e^{-\nfrac{r^2}{2}}}{\sqrt{2\pi}}dr = \gamma_{n+p-1}^{n+p-1}$/2.
Choosing $n \gg p/\eta$ then proves the claim.
\end{proof}

\subsection{Discretizing the gap example}
A discrete analog of the above, i.e., picking sufficiently many samples from the same distribution, gives us our final integrality gap (or ``rank gap'') example.
\begin{theorem} \label{thm:discrete-gap}
Given any $\eta > 0$, there exist $m_{0}, n_{0} \in \Z$ such that for all $m \geq m_{0}$ and $n \geq n_{0}$, if we pick i.i.d. random points $a_{1}, a_{2}, \dotsc, a_{m} \in \R^{n}$ with each point having i.i.d. $N(0,1)$ coordinates, then with some non-zero probability,
\[
\min_{\norm{z}_{2}=1} \left(\frac{1}{m} \sum_{i=1}^{m} \abs{\mydot{a_{i}}{z}}^{p}\right)^{\nfrac{1}{p}} \geq (1-\eta) \cdot \gamma_{p} \cdot \min_{\substack{I \succcurlyeq X \succcurlyeq 0 \\ \tr(X)=1}} \left(\int_{a \in \R^n} \abs{a_{i}^{T} X a_{i}}^{\nfrac{p}{2}}\right)^{\nfrac{1}{p}}.
\]
In other words, there exist points $b_{1}, b_{2}, \dotsc, b_{m} \in \R^{n}$, where $b_{i} \defeq  m^{-\nfrac{1}{p}} \cdot a_{i}$, giving the desired integrality gap example.
\end{theorem}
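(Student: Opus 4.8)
The plan is to obtain Theorem~\ref{thm:discrete-gap} from Theorem~\ref{thm:cont-gap} by a standard "concentration plus union bound over a net" argument, splitting into the lower bound on the integral (sampled) objective and the upper bound on the convex relaxation. For the convex relaxation side this is immediate: the matrix $X = \nfrac{1}{n}\cdot I$ is feasible regardless of the sample, and by the strong law of large numbers (or a Chernoff/Bernstein bound for the nonnegative random variable $(a_i^T X a_i)^{\nfrac{p}{2}}$, whose mean is finite since Gaussians have all moments) the empirical average $\nfrac{1}{m}\sum_i (a_i^T X a_i)^{\nfrac{p}{2}}$ is within a $(1\pm\eta')$ factor of its expectation $\inparen{1+O(p)/n}^{\nfrac{p}{2}}$ once $m$ is large; hence the sampled convex optimum is at most $\inparen{1+O(p)/n}^{\nfrac{1}{2}}(1+\eta')$ except with tiny probability.

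For the left-hand side I would first fix a single unit vector $z$. By rotational invariance $\mydot{a_i}{z}$ is a standard Gaussian, so $\abs{\mydot{a_i}{z}}^p$ has mean $\gamma_p^p$; by concentration of the empirical average (again the summands are nonnegative with finite mean, and in fact sub-exponential tails suffice), $\nfrac{1}{m}\sum_i \abs{\mydot{a_i}{z}}^p \geq (1-\eta'')\gamma_p^p$ except with probability $\exp(-\Omega(m\eta''^2))$ or so. To make this hold \emph{simultaneously} for all $z \in \sph^{n-1}$, I would take a $\delta$-net $\mathcal{N}$ of the unit sphere of size $(O(1/\delta))^n$, apply the tail bound and a union bound so that the estimate holds on all of $\mathcal{N}$ simultaneously (this forces $m = \Omega(n\log(1/\delta)/\eta''^2)$, which is fine since $m$ may depend on $n$), and then pass from the net to all of $\sph^{n-1}$ by a Lipschitz/continuity argument: $z \mapsto \inparen{\nfrac{1}{m}\sum_i \abs{\mydot{a_i}{z}}^p}^{\nfrac{1}{p}}$ is the $\ell_p$ norm of the vector $(m^{-\nfrac{1}{p}}\mydot{a_i}{z})_i = A z$, so it is a seminorm in $z$ and hence $1$-Lipschitz with respect to the operator norm $\norm{A}_{2\to p}$; bounding $\norm{A}_{2\to p}$ (or just $\norm{A}_{2\to 2}\le O(\sqrt m + \sqrt n)$ times $m^{1/p-1/2}$, with high probability) lets me choose $\delta$ small enough that the net error is absorbed into another $(1-\eta''')$ factor.

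Combining: with nonzero (indeed overwhelming) probability the sampled LHS is at least $(1-\eta'')(1-\eta''')\gamma_p^p$ in $p$-th power, i.e.\ at least $(1-\eta/2)\gamma_p$ after taking $p$-th roots and choosing the auxiliary parameters small, while the sampled convex optimum is at most $(1+\eta/4)$, and since $\gamma_p \ge 1$ the ratio is at least $(1-\eta)\gamma_p$; choosing $m_0,n_0$ large enough to meet all the inequalities finishes the proof, and the claimed points are $b_i = m^{-\nfrac{1}{p}} a_i$ by the remark's rescaling of the measure. The main obstacle is the uniformity over the sphere on the LHS — a single bad direction $z$ could in principle make the empirical $p$-th moment small — so the crux is getting the net size, the tail bound, and the Lipschitz constant of $z\mapsto\norm{Az}_p$ to fit together; everything else is routine moment/concentration bookkeeping.
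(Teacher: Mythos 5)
Your proposal is correct and follows essentially the same route as the paper: concentrate the empirical $p$-th moment at a single fixed direction, union-bound over a $\delta$-net of $\sph^{n-1}$, pass from the net to the whole sphere by a continuity argument, and verify the relaxation side separately with $X = \nfrac{1}{n}\cdot I$. One small caution on the concentration step: for $p>2$ the variable $\abs{\mydot{a_i}{z}}^p$ is not sub-exponential (its tail decays like $\exp(-\Theta(x^{2/p}))$), so the $\exp(-\Omega(m\eta''^2))$ failure probability you allude to is not available from Chernoff/Bernstein; the paper simply uses Chebyshev's inequality and takes $m$ exponentially large in $n$ (which the union bound over the $(9/\delta)^n$-size net then demands anyway), and this suffices because the theorem asks only for a nonzero-probability, i.e.\ existence, event. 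For the net-to-sphere step the paper writes out the elementary first-order bound $\sum_i \abs{\mydot{a_i}{z}}^p \geq \sum_i \mydot{a_i}{y}^p - p\delta\sum_i\norm{a_i}_2^{p-1}$ and controls $\nfrac{1}{m}\sum_i\norm{a_i}_2^{p-1}$ by another Chebyshev application, which is the concrete instantiation of your seminorm/operator-norm Lipschitz argument for $z\mapsto\norm{Az}_p$; both are fine, though the paper's version avoids having to separately bound a random operator norm.
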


The theorem can be proved by using the continuous gap instance, and concentration bounds for 
the samples $a_1, \ldots, a_m$. We defer a full proof to the appendix.

\section{Unique-Games Hardness}\label{sec:ug-hard}
\subsection{Khot's Unique Games Conjecture}
We shall show a reduction to subspace approximation problem from the Unique Label Cover problem
defined below.
\begin{definition}
An instance of \ULC with alphabet size $R$ is specified as a bipartite graph $\instance$ with a
set of permutations $\{\pi_{vw}: [R] \to [R]\}_{(v,w) \in E}$. A labeling  $\L: V \cup W \to [R]$
is said to satisfy an edge $(v,w)$ if $\L(w) = \pi_{vw}(\L(v))$. We denote by $\val(\U)$ the
maximum fraction of edges satisfied by any labeling $\L$.
\end{definition}
The \UGC proposed by Khot in \cite{Khot02} conjectures the hardness of distinguishing between the
cases when the optimum to the above problem is very close to 1 and when it is very close to 0.
This conjecture is an important complexity assumption as several approximation
problems have been shown to be at least as hard as deciding if a given
instance $\U$ of \ULC problem has $\val(\U) > 1-\epsilon$ or $\val(\U) < \delta$ for
appropriate positive constants $\epsilon$ and $\delta$.
\begin{conjecture}[Khot \cite{Khot02}]
Given any constants $\epsilon, \delta > 0$, there is an integer $R$ such that
it is NP-hard to decide if for given an instance
$\instance$ of Unique Label Cover with alphabet size $R$,
$\val(\U) \geq  1-\epsilon$ or $\val(\U) \leq  \delta$.
\end{conjecture}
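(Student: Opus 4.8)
The statement just displayed is Khot's Unique Games Conjecture, one of the central open problems in complexity theory; no proof of it is known, and the present paper uses it only as a hypothesis (every result in Section~\ref{sec:ug-hard} is stated conditionally on it). So what follows is not a proof but an outline of what a proof would have to accomplish and the point at which all known approaches stall.

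The plan, were one to attempt a proof, would be to reduce from a problem already known to be NP-hard. The canonical starting point is the PCP Theorem together with Raz's Parallel Repetition Theorem, which give, for every constant $\delta>0$, an NP-hardness for \LC instances $\instance$ whose constraints $\pi_{vw}$ are \emph{projections} (many-to-one maps, not bijections): it is NP-hard to distinguish $\val(\U)=1$ from $\val(\U)\le\delta$. One would then need a label-extension or gadget step that turns such a projection instance into a \ULC instance --- all constraints bijective permutations of some $[R']$ --- while keeping the soundness small (the ``no'' case below some $\delta'$) and, crucially, raising the completeness of the ``yes'' case all the way to $\val\ge 1-\eps$ for an arbitrarily small constant $\eps>0$.

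The main obstacle is exactly this last point: bijectivity of the constraints and near-perfect completeness pull against each other. Any local gadget forcing a many-to-one constraint to be simulated by permutations must be unsatisfiable on some configurations that are consistent with a satisfying assignment of the original instance, so it loses a constant factor in completeness --- one gets $\val\le 1-c$ for a fixed $c>0$ rather than $\val\ge 1-\eps$ for every $\eps$ --- and amplifying completeness by further repetition of the unique instance degrades soundness at least as fast, while known soundness-amplification tricks (graph products and the like) do not restore near-perfect completeness either. This is precisely the gap that remains open. The strongest partial progress one can point to is the NP-hardness of the ``$2$-to-$2$'' variant with imperfect completeness, established through the Grassmann-graph expansion program of Khot, Minzer, Safra and collaborators, which yields, e.g., hardness of \UG with completeness close to $1/2$; bridging from there up to completeness $1-\eps$ with strictly bijective constraints is still unresolved, and I would not expect to close it here. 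The paper therefore, as is standard, takes the conjecture as an assumption.
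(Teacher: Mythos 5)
You correctly recognize that the statement is Khot's Unique Games Conjecture, a standing open problem, and that the paper neither proves it nor claims to --- it is stated as a \emph{conjecture} and used purely as a hypothesis for the conditional hardness result in Section~\ref{sec:ug-hard}. Your survey of the landscape (PCP plus parallel repetition giving \LC hardness with projection constraints, the tension between bijectivity and near-perfect completeness, and the partial progress through the $2$-to-$2$ / Grassmann program) is accurate and is exactly the right way to respond to a request for a ``proof'' of an open conjecture. There is nothing in the paper to compare against; your answer matches the paper's treatment, which is simply to assume the conjecture.
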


\subsection{Reduction from Unique Label Cover}
We will now prove Unique-Games hardness of approximating \subspace{$n-1$}{$p$}
within a factor better than $\gamma_p$.  As in  Section \ref{sec:gap}, this also gives
a hardness approximating \subspace{$k$}{$p$} for $k$ which is a sufficiently large
function of $k$, by a trivial embedding of the given instance $\R^n$ into $\R^{n'}$ such
$k(n') = n-1$. If we want $n'$ to be a polynomial in $n$, this will give a hardness for all
$k = n^{\Omega(1)}$.

We describe below the reduction from an instance \instance  of \ULC with alphabet size $R$
to \subspace{$n-1$}{$p$}.
The variables in our reduction will be of the form $b_{w,i}$ for each $w \in W$ and
$i \in [R]$. We denote the vector $(b_{w,1}, \ldots, b_{w,R})$ by
$\bfb_w$ and for each $v \in V$, define $\bfb_v \defeq \av_{w \in N(v)}[\pi_{wv}(\bfb_w)]$.
For any $\bfb \in \R^R$, we define the function $f_b: \pmone^R \to \R$ as
\[ f_{b}(x_1, \ldots, x_R) ~\defeq~ \sum_{i=1}^R x_i \cdot b_{i}  \]
Norms for functions are defined as usual (over the uniform probability measure).
Note that $\norm{f_{b}}_2^2 = \norm{\bfb}_2^2$. When the exponent in the norm is
unspecified, $\norm{\cdot}$ denotes $\norm{\cdot}_2$.

Given an instance $\instance$ of \ULC we output the following instance of subspace
approximation, for a suitable constant $B$ to be determined later:
\begin{align*}
\text{minimize} &~~~~~\av_{(v,w) \in E}\left[\norm{\fb{v}}_p^p\right]
 + B \cdot \av_{(v,w) \in E}\left[\norm{\fb{v} - f_{\pi_{wv}(b_w)}}_p^p\right]\\
\text{subject to:} &~~~~~\av_{(v,w) \in E}\left[\norm{\fb{w}}_2^2\right]
~=~ \av_{(v,w) \in E}\left[ \norm{\bfb_w}_2^2 \right] ~\geq~  1
\end{align*}
Note that the variables in the problem are only the vectors $\bfb_w$ for all $w \in W$.
It is easy to verify the functions $\fb{v}$ and $\fb{w}$ can be generated by application
of an appropriate operator $A$. In the proof below we shall often drop the subscript
on the permutations $\pi_{wv}$ when it is clear from the context.
Note that value of instance of \subspace{$n-1$}{$p$} is actually the $p^{th}$
root of the above objective. Let $(\opt)^p$ denote the
optimal value for the above objective (so that $\opt$ is the optimal value for
\subspace{$n-1$}{$p$}).

\subsubsection*{Completeness}
The following claim shows that the optimum of the subspace approximation
problem is low when the \ULC is instance is highly satisfiable.
\begin{claim}\label{clm:ug-opt}
If $\val(\U) \geq 1-\epsilon$, then $(\opt)^p \leq 1 + \epsilon \cdot  B \cdot 2^p.$
\end{claim}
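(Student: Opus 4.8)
The plan is to exhibit an explicit assignment to the variables $\bfb_w$ coming from a good labeling and bound the objective directly. Suppose $\val(\U) \ge 1 - \epsilon$ and fix a labeling $\L : V \cup W \to [R]$ satisfying at least a $(1-\epsilon)$ fraction of the edges. First I would set $\bfb_w = e_{\L(w)}$, the standard basis vector in $\R^R$ indexed by the label of $w$; this is the natural "dictator" encoding, since $f_{e_{\L(w)}}$ is exactly the dictator function $x \mapsto x_{\L(w)}$. One checks immediately that this assignment is feasible: $\norm{\bfb_w}_2^2 = 1$ for every $w$, so $\av_{(v,w)\in E}[\norm{\bfb_w}_2^2] = 1 \ge 1$.

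Next I would compute $\bfb_v = \av_{w \in N(v)}[\pi_{wv}(\bfb_w)] = \av_{w \in N(v)}[e_{\pi_{wv}(\L(w))}]$. For an edge $(v,w)$ that is satisfied by $\L$ we have $\pi_{wv}(\L(w)) = \L(v)$, so that term contributes $e_{\L(v)}$; for unsatisfied edges we only know the contribution is some basis vector. The key observation is then a term-by-term bound on the two pieces of the objective. For the first term, $\norm{\fb{v}}_p^p = \norm{\bfb_v}_p^p \le \norm{\bfb_v}_2^p \le 1$ since $\bfb_v$ is an average of unit vectors (hence has $\ell_2$ norm at most $1$), using $p \ge 1$ so that $\norm{\cdot}_p \le \norm{\cdot}_2$ on $\R^R$; wait — $\ell_p \le \ell_2$ needs $p \ge 2$, so for $1 \le p < 2$ I would instead bound $\norm{\bfb_v}_p \le R^{1/p - 1/2}\norm{\bfb_v}_2$, which is not uniformly bounded. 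The cleaner route that works for all $p \ge 1$: since each $\bfb_w$ is a $\pm$-permutation of a basis vector, $f_{\bfb_w}(x) = \pm x_{i}$ takes values in $\pmone$, so $\norm{\fb{w}}_p^p = 1$ exactly; and $\fb{v}$ is an average over $w \in N(v)$ of such $\pm 1$-valued functions, so $|\fb{v}(x)| \le 1$ pointwise, giving $\norm{\fb{v}}_p^p = \av_x[|\fb{v}(x)|^p] \le 1$. This handles the first term: its contribution to $(\opt)^p$ is at most $1$.

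For the second term I would bound $\norm{\fb{v} - f_{\pi_{wv}(b_w)}}_p^p$ edge by edge. If $(v,w)$ is satisfied, I claim this quantity is small: $f_{\pi_{wv}(\bfb_w)} = x \mapsto x_{\pi_{wv}(\L(w))} = x_{\L(v)}$, which is exactly the term averaged into $\fb{v}$ from $w$; the discrepancy comes only from the other neighbors of $v$, and in the worst case this is crude — so actually the honest bound is just the trivial one: $\fb{v}$ and $f_{\pi_{wv}(\bfb_w)}$ are each $\pm 1$-valued-or-averages thereof with pointwise absolute value $\le 1$, so their difference has pointwise absolute value $\le 2$, giving $\norm{\fb{v} - f_{\pi_{wv}(b_w)}}_p^p \le 2^p$ for every edge, satisfied or not. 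Then $\av_{(v,w)\in E}[\norm{\fb{v} - f_{\pi_{wv}(b_w)}}_p^p] \le 2^p$ — but this is too weak; I need the average to be at most $\epsilon \cdot 2^p$. The resolution: for a \emph{satisfied} edge the difference is genuinely $0$, because if $w$ were the only neighbor of $v$ then $\fb{v} = f_{\pi_{wv}(\bfb_w)}$ exactly — but with several neighbors that fails. So the right choice is probably to weight so that the analysis goes through, or to note the paper's objective averages over $E$ and the "bad" contribution $2^p$ only occurs on the $\le \epsilon$ fraction of unsatisfied edges \emph{provided} satisfied edges contribute $0$; establishing that requires $\bfb_v$ to equal $e_{\L(v)}$, i.e. \emph{all} edges at $v$ being satisfied. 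The main obstacle — and the step I expect to need care — is exactly this: relating "$(v,w)$ satisfied" to "$\fb{v} = f_{\pi_{wv}(\bfb_w)}$" when $\bfb_v$ is an average over all of $N(v)$. I would handle it by a standard counting argument: call a vertex $v$ \emph{good} if every edge incident to it is satisfied; by a Markov/averaging bound the fraction of edges incident to a non-good vertex is $O(\epsilon)$ (after possibly passing to the bipartite-double-cover or using regularity of the \ULC instance), and for edges $(v,w)$ with $v$ good we have $\bfb_v = e_{\L(v)}$ and hence the second term vanishes. Summing: $(\opt)^p \le 1 + B \cdot O(\epsilon) \cdot 2^p$, and after absorbing constants (or tightening the good-vertex bound using that the instance may be taken regular, so the edge-fraction is exactly $\le \epsilon$), this gives $(\opt)^p \le 1 + \epsilon \cdot B \cdot 2^p$ as stated. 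I would double-check whether the authors' convention makes the \ULC instance left/right-regular, since that is what lets the $O(\epsilon)$ be a clean $\epsilon$.
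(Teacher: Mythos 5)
Your choice of assignment $\bfb_w = e_{\L(w)}$ and your bound on the first term are exactly what the paper does (your corrected pointwise argument that $|\fb{v}(x)| \le 1$ is the right one; the earlier line ``$\norm{\fb{v}}_p^p = \norm{\bfb_v}_p^p$'' is false for $p\neq 2$, but you discard it). The gap is in the second term, and you have correctly flagged where it sits: relating ``$(v,w)$ satisfied'' to ``$\fb{v} = \fbpi{w}$'' when $\bfb_v$ is an average over all of $N(v)$.

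Your proposed ``good vertex'' fix does not close this gap. If the \ULC instance has left degree $d$, then $\epsilon|E|$ unsatisfied edges can make up to $\epsilon|E|$ vertices non-good, and those vertices are incident to up to $d\epsilon|E|$ edges; the fraction of edges touching a non-good $v$ is therefore $\Theta(d\epsilon)$, not $O(\epsilon)$. Since $d$ is not bounded in the \UGC (and certainly not by an absolute constant), this is a genuine loss, not a constant to absorb. The paper sidesteps vertex-based accounting altogether. It first applies Jensen to the averaging inside $\fb{v}$, pulling the average out of the norm:
\begin{align*}
\av_{(v,w_1) \in E}\left[\norm{\fb{v} - f_{\pi_{w_1 v}(b_{w_1})}}_p^p\right]
~\le~ \av_{v,w_1,w_2}\left[\norm{f_{\pi_{w_2 v}(b_{w_2})} - f_{\pi_{w_1 v}(b_{w_1})}}_p^p\right].
\end{align*}
For the dictator encoding each term inside is $2^{p-1}$ if $\pi_{w_1 v}(\L(w_1)) \neq \pi_{w_2 v}(\L(w_2))$ and $0$ otherwise (the difference of two distinct dictators is $\pm 2$ on half the cube), and a union bound via the common target $\L(v)$ gives
\begin{align*}
\Pr_{v,w_1,w_2}\left[\pi_{w_1 v}(\L(w_1)) \neq \pi_{w_2 v}(\L(w_2))\right] ~\le~ 2\epsilon,
\end{align*}
yielding $2^{p-1}\cdot 2\epsilon = 2^p\epsilon$ exactly, with no dependence on $d$ and no regularity assumption. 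That two-neighbor comparison (replace $\fb{v}$ by a fresh random $\fbpi{w_2}$ before comparing) is the missing idea in your write-up.
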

\begin{proof}
By assumption, there exists a labeling $\L: V \cup W \to [R]$ such that
$\pr_{(v,w) \in E}[\L(v) \neq \pi_{wv}(\L(w))] \leq \epsilon$. We
construct a solution the above instance of the subspace approximation
problem, taking $b_{w,i} = 1$ if $\L(w) = i$ and 0 otherwise. It is easy
to check that $\av_{(v,w) \in E}\left[\norm{\fb{w}}_2^2\right] = 1$.

We now bound the value of the objective function. First note that
$\fb{v} = \av_{w \in N(v)}[\fbpi{w}]$ is bounded between -1 and 1,
which implies $\av_{(v,w)\in E}\left[\norm{\fb{v}}_p^p\right] \leq 1$.
To bound the second term, we can use Jensen's Inequality to get
\begin{eqnarray*}
 \av_{(v,w) \in E}\left[\norm{\fb{v} - f_{\pi_{wv}(b_w)}}_p^p\right]
&=& \av_{(v,w_1) \in E}\left[\norm{\av_{w_2 \in N(v)}\left[f_{\pi_{w_2v}(b_{w_2})}\right] - f_{\pi_{w_1v}(b_{w_1})}}_p^p\right] \\
&\leq&  \av_{v,w_1,w_2}\left[\norm{f_{\pi_{w_2v}(b_{w_2})} - f_{\pi_{w_1v}(b_{w_1})}}_p^p\right].
\end{eqnarray*}
Note that $\norm{f_{\pi_{w_2v}(b_{w_2})} - f_{\pi_{w_1v}(b_{w_1})}}_p^p$ equals
$2^{p-1}$ if $\pi_{w_1v}(\L(w_1)) \neq \pi_{w_2v}(\L(w_2))$ and 0 otherwise.
Hence,
\begin{eqnarray*}
\av_{v,w_1,w_2}\left[\norm{f_{\pi_{w_2v}(b_{w_2})} - f_{\pi_{w_1v}(b_{w_1})}}_p^p\right]
&=& 2^{p-1} \cdot \Pr_{v,w_1,w_2}\left[\pi_{w_1v}(\L(w_1)) \neq \pi_{w_2v}(\L(w_2))\right] \\
&\leq& 2^{p-1} \left(\Pr_{v,w_1}\left[\pi_{w_1v}(\L(w_1)) \neq \L(v)\right] +
\Pr_{v,w_2}\left[\pi_{w_2v}(\L(w_2)) \neq \L(v)\right] \right) \\
&\leq& 2^p\cdot \epsilon.
\end{eqnarray*}
Combining the two bounds above gives $(\opt)^p \leq 1 + \epsilon \cdot B \cdot 2^p$.
\end{proof}

\subsubsection*{Soundness}
For the soundness, we need to prove that if $\val(\U) \leq \delta$, then
$\opt \geq \gamma_p^p \cdot (1-\nu)$ where $\nu$ is a small constant depending on
$\epsilon$ and $\delta$. We first make some simple observations about the optimal solution.
\begin{claim}\label{clm:ug-opt}
For any optimal solution $\{\bfb_w\}_{w \in W}$ to the above instance of \subspace{$n-1$}{$p$},
it must be true that
\begin{enumerate}
\item $\av_{(v,w) \in E} \left[ \norm{\bfb_v}^2 \right]  ~\leq~ \av_{(v,w) \in E} \left[
    \norm{\bfb_w}^2 \right]  ~=~ 1$
\item $\av_{(v,w) \in E}\left[\norm{\fb{v} - \fbpi{w}}_p^p\right]  \leq \nfrac{\gamma_p^p}{B}$.
\end{enumerate}
\end{claim}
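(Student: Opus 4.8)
The plan is to prove the two assertions by elementary arguments: the first by a homogeneity/rescaling argument combined with convexity, and the second by plugging one well-chosen feasible solution into the minimization.

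For the first assertion, I would begin by noting that the objective $\av_{(v,w)\in E}[\norm{\fb{v}}_p^p] + B\cdot\av_{(v,w)\in E}[\norm{\fb{v}-\fbpi{w}}_p^p]$ is positively homogeneous of degree $p$ in the variables $\{\bfb_w\}_{w\in W}$, because each $\fb{v}$ is a linear function of these variables and $\norm{\cdot}_p^p$ has degree $p$. I would then check the objective is strictly positive on the feasible region: if it were $0$, then $\fb{v}=0$ on every edge and, the second term being $0$, also $\fbpi{w}=\fb{v}=0$ on every edge; since $f_c=0$ iff $c=0$, this forces $\bfb_w=0$ for every $w$ incident to an edge, contradicting $\av_{(v,w)\in E}[\norm{\bfb_w}^2]\geq 1$. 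Now, given an optimal solution with $\av_{(v,w)\in E}[\norm{\bfb_w}^2]=c$, rescaling every $\bfb_w$ to $\bfb_w/\sqrt c$ keeps feasibility and multiplies the (positive) objective by $c^{-p/2}$; optimality forces $c\leq 1$ and the constraint forces $c\geq 1$, so $c=1$. For the inequality $\av_{(v,w)\in E}[\norm{\bfb_v}^2]\leq 1$, I would use that $\bfb_v=\av_{w\in N(v)}[\pi_{wv}(\bfb_w)]$ is an average of vectors each of $\ell_2$-norm $\norm{\bfb_w}$ (permutations preserve the $\ell_2$-norm), so convexity of $\norm{\cdot}^2$ gives $\norm{\bfb_v}^2\leq\av_{w\in N(v)}[\norm{\bfb_w}^2]$; averaging this over $v$ against the edge-marginal turns the right side into $\av_{(v,w)\in E}[\norm{\bfb_w}^2]=1$.

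For the second assertion, the crucial step is to upper bound $(\opt)^p$ by exhibiting the feasible solution $\bfb_w=\frac{1}{\sqrt R}(1,\ldots,1)$ for every $w\in W$. This vector is fixed by every coordinate permutation $\pi_{wv}$, hence $\bfb_v=\frac{1}{\sqrt R}(1,\ldots,1)$ as well, so $\fb{v}-\fbpi{w}=f_{\bfb_v-\pi_{wv}(\bfb_w)}=0$ on every edge and the second term of the objective vanishes; the constraint holds with equality since $\norm{\bfb_w}^2=1$. The first term is then $\av_{x\in\pmone^R}\bigl[\,\bigl|\tfrac{1}{\sqrt R}\sum_{i=1}^R x_i\bigr|^p\,\bigr]\leq\gamma_p^p$ by Claim~\ref{clm:upper-gaussian} applied with all coefficients equal to $1/\sqrt R$ (so $\norm{\bf c}=1$). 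Therefore $(\opt)^p\leq\gamma_p^p$. Finally, since in the optimal solution the first term $\av_{(v,w)\in E}[\norm{\fb{v}}_p^p]$ is non-negative, we obtain $B\cdot\av_{(v,w)\in E}[\norm{\fb{v}-\fbpi{w}}_p^p]\leq(\opt)^p\leq\gamma_p^p$, i.e.\ $\av_{(v,w)\in E}[\norm{\fb{v}-\fbpi{w}}_p^p]\leq\gamma_p^p/B$.

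The main point requiring care — and the only place where I expect any subtlety — is the measure bookkeeping in the first assertion: one must use the standard bipartite \ULC convention that the average over $w\in N(v)$ defining $\bfb_v$ is taken so that drawing $v$ from the edge-marginal and then $w$ from $N(v)$ reproduces the uniform distribution over edges, which makes $\av_{(v,w)\in E}\av_{w'\in N(v)}[\norm{\bfb_{w'}}^2]=\av_{(v,w)\in E}[\norm{\bfb_w}^2]$ immediate. Beyond this, and the strict-positivity check needed to make the rescaling argument rigorous, everything is routine.
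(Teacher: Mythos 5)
Your proof is correct and follows essentially the same route as the paper: rescaling to force the constraint tight, Jensen/convexity for the inequality $\av[\norm{\bfb_v}^2]\leq\av[\norm{\bfb_w}^2]$, and plugging in the all-$\frac{1}{\sqrt R}$ solution together with Claim~\ref{clm:upper-gaussian} to bound the second term. The one thing you add that the paper elides is the explicit check that the objective is strictly positive on the feasible region (so the rescaling argument really applies to \emph{every} optimal solution rather than merely to \emph{some} optimal solution); this is a genuine gap in the paper's one-line "scaling can only improve the objective" justification, and your fix is the right one.
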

\begin{proof}
Since scaling all vectors by a constant less than 1 can only improve the value of
the objective, we can assume that for the vectors $\{\bfb_w\}_{w \in W}$
in the solution $\av_{(v,w) \in E} \insquar{\norm{\bfb_w}_2^2} = 1$. Then
Jensen's inequality gives\\
\begin{equation*}
\av_{(v,w) \in E} \left[ \norm{\bfb_v}^2 \right]
~=~ \av_{(v,w) \in E} \left[ \norm{\av_{w'\in N(v)} \insquar{\pi_{w'v}(\bfb_{w'})}}^2 \right]
~\leq~ \av_{(v,w) \in E}\insquar{\norm{\bfb_w}^2} ~=~ 1.
\end{equation*}

To deduce the second fact, we show that there exists a feasible solution $\{\bfb_{w}\}_{w \in W}$ such that
$\opt \leq \gamma_p^p$. For all $w \in W$, we take $\bfb_w = (\nfrac{1}{\sqrt{R}}, \ldots, \nfrac{1}{\sqrt{R}})$.
The solution is feasible since $\norm{\bfb_w} = 1$ for each $w \in W$ and also
$\av_{(v,w) \in E}\left[\norm{\fb{v} - \fbpi{w}}_p^p\right] = 0$.
Also, since $\fb{v}$ is a linear function of Bernoulli variables and $\norm{\bfb_v} = 1$,
Claim \ref{clm:upper-gaussian} gives that for each $v \in V$, $\norm{\fb{v}}_p \leq \gamma_p$.
\end{proof}

We show that if $\val(\U) \leq \delta$, then in fact the first term itself is
approximately $\gamma_p^p$.
As is standard in Unique Games based reductions, the
proof proceeds by arguing separately about the ``high-influence'' and
``low-influence'' cases. However, since the inputs for our problem
are not in the form of a long-code but the vectors $\bfb$, we will use
$\max_{i \in R} \{\abs{b_{i}}/\norm{\bfb}\}$ as a substitute for
influence of the $i^{th}$  variable on the function $\fb{}$.

For the vertices $v \in V$ where the functions $\fb{v}$ have no influential
coordinates, the Central Limit Theorem shows that $\norm{\fb{v}}_p$ is very
close to $\gamma_p$. We then show that the contribution of the remaining
vertices to the objective function is small.

Below, we define $S_1$ to be the set of vertices corresponding to low influence
functions and divide the remaining vertices into three cases which we shall
analyze separately. The parameters $\tau, \beta \in (0,1/2)$ will be chosen later.
\begin{eqnarray*}
S_1 &\defeq & \left\{ v \in V ~~\left\lvert~~
  \max_{i \in [R]}\{\abs{b_{v,i}}\} < \tau\cdot \norm{\bfb_v}
\right.\right\}\\
S_2 & \defeq & \left\{ v \in V ~~\left\lvert~~
  \norm{\bfb_v}^2  \leq (1-\beta) \cdot \av_{w \in N(v)}\left[\norm{\bfb_w}^2\right]
\right.\right\}\\
S_3 & \defeq & \left\{ v \in V \setminus S_2 ~~\left\lvert~~
  \exists i~ s.t. ~~\abs{b_{v,i}} \geq \tau \cdot  \norm{\bfb_v} ~\text{and}~
  \Pr_{w \in N(v)}\left[\abs{b_{w,\pi_{vw}(i)}} \geq \nfrac{\tau}{4}\cdot \norm{\bfb_w} \right] \leq \nfrac14
\right.\right\}\\
S_4 &\defeq & V \setminus (S_1 \cup S_2 \cup S_3).
\end{eqnarray*}

Since $\fb{v}(x_1, \ldots, x_R) = b_{v,1} \cdot x_1 + \cdots + b_{v,R} \cdot x_R$ is a linear
function of Bernoulli variables, Claim \ref{clm:clt} gives that
\begin{equation}\label{eqn:S1}
\forall v \in S_1 \qquad
\norm{\fb{v}}_p^p ~~\geq~~ \gamma_p^p \cdot \norm{\bfb_v}_2^p \cdot  \inparen{1 - 10\tau\cdot(\log(\nfrac{1}{\tau}))^{\nfrac{p}{2}}}
\end{equation}

Note that the norm $\norm{\fb{v}}_p$  may be unbounded for individual vertices. Hence we will use the quantity
$\av_{(v,w) \in E} \insquar{\ind{S_i}(v) \cdot \norm{\bfb_v}^2} $ as a measure of the contribution of the set
$S_i$ to the objective, where $\ind{S_i}(\cdot)$ is the indicator function of the set $S_i$. Claims
\ref{clm:S2}, \ref{clm:S3} and \ref{clm:S4} help bound the contribution of the sets $S_2, S_3$
and $S_4$.

\begin{claim}\label{clm:S2}
\[\av_{(v,w) \in E}\left[(1-\ind{S_2}(v)) \cdot \norm{\bfb_v}^2 \right] \geq 1- \beta -
\frac{4\gamma_p^2}{\beta B^{\nfrac{2}{p}}}.\]
\end{claim}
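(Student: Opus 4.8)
The plan is to establish the complementary bound --- that the ``bad'' vertices $v\in S_2$ carry little weight --- and then feed in the budget on $\av_{(v,w)\in E}\big[\norm{\fb{v}-\fbpi{w}}_p^p\big]$ supplied by Claim~\ref{clm:ug-opt}. First note that $\norm{\bfb_v}^2$ does not depend on $w$, so $\av_{(v,w)\in E}\big[(1-\ind{S_2}(v))\norm{\bfb_v}^2\big]=\av_v\big[(1-\ind{S_2}(v))\norm{\bfb_v}^2\big]$ where $v$ is the $V$-endpoint of a uniformly random edge; under this distribution $\av_v[g(v)]=\av_v\big[\av_{w\in N(v)}[g(v)]\big]$ and, by Claim~\ref{clm:ug-opt}, $\av_v\big[\av_{w\in N(v)}[\norm{\bfb_w}^2]\big]=\av_{(v,w)\in E}[\norm{\bfb_w}^2]=1$. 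For $v\notin S_2$ the definition of $S_2$ gives $\norm{\bfb_v}^2\geq(1-\beta)\av_{w\in N(v)}[\norm{\bfb_w}^2]$ pointwise (for $v\in S_2$ both sides of the next inequality vanish), hence
\begin{equation*}
\av_{(v,w)\in E}\big[(1-\ind{S_2}(v))\norm{\bfb_v}^2\big]
~\geq~(1-\beta)\left(1-\av_v\big[\ind{S_2}(v)\cdot\av_{w\in N(v)}[\norm{\bfb_w}^2]\big]\right),
\end{equation*}
so it suffices to bound $\av_v\big[\ind{S_2}(v)\cdot\av_{w\in N(v)}[\norm{\bfb_w}^2]\big]$ by a constant multiple of $\frac{\gamma_p^2}{\beta B^{\nfrac{2}{p}}}$.

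For this I would use the ``total variance'' identity: since $\bfb_v=\av_{w\in N(v)}[\pi_{wv}(\bfb_w)]$ and permutations preserve the $2$-norm, $\av_{w\in N(v)}[\norm{\bfb_w}^2]-\norm{\bfb_v}^2=\av_{w\in N(v)}\big[\norm{\pi_{wv}(\bfb_w)-\bfb_v}^2\big]$. When $v\in S_2$ we have $\norm{\bfb_v}^2\leq(1-\beta)\av_{w\in N(v)}[\norm{\bfb_w}^2]$, i.e. the left-hand side is at least $\beta\cdot\av_{w\in N(v)}[\norm{\bfb_w}^2]$, so $\av_{w\in N(v)}[\norm{\bfb_w}^2]\leq\frac1\beta\av_{w\in N(v)}\big[\norm{\pi_{wv}(\bfb_w)-\bfb_v}^2\big]$ for every $v\in S_2$. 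Averaging over $v$ and dropping $\ind{S_2}\leq1$ from the (nonnegative) summand,
\begin{equation*}
\av_v\big[\ind{S_2}(v)\cdot\av_{w\in N(v)}[\norm{\bfb_w}^2]\big]
~\leq~\frac1\beta\,\av_{(v,w)\in E}\big[\norm{\pi_{wv}(\bfb_w)-\bfb_v}^2\big]
~=~\frac1\beta\,\av_{(v,w)\in E}\big[\norm{\fbpi{w}-\fb{v}}_2^2\big],
\end{equation*}
using $\norm{\pi_{wv}(\bfb_w)-\bfb_v}_2=\norm{\fbpi{w}-\fb{v}}_2$. Finally, since norms of functions are taken over the uniform probability measure, $\norm{\fbpi{w}-\fb{v}}_2\leq\norm{\fbpi{w}-\fb{v}}_p$ for $p\geq2$, so Jensen's inequality (concavity of $t\mapsto t^{2/p}$) together with Claim~\ref{clm:ug-opt} gives $\av_{(v,w)\in E}\big[\norm{\fbpi{w}-\fb{v}}_2^2\big]\leq\big(\av_{(v,w)\in E}[\norm{\fbpi{w}-\fb{v}}_p^p]\big)^{2/p}\leq\big(\tfrac{\gamma_p^p}{B}\big)^{2/p}=\frac{\gamma_p^2}{B^{\nfrac{2}{p}}}$. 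Chaining the three displays and using $1-\beta\leq1$ yields $\av_{(v,w)\in E}\big[(1-\ind{S_2}(v))\norm{\bfb_v}^2\big]\geq(1-\beta)\big(1-\tfrac{\gamma_p^2}{\beta B^{\nfrac{2}{p}}}\big)\geq1-\beta-\tfrac{\gamma_p^2}{\beta B^{\nfrac{2}{p}}}$, which is in fact stronger than claimed.

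The algebraic steps --- manipulating the nested edge/neighbour averages, the total-variance identity, Jensen, and plugging in the budget from Claim~\ref{clm:ug-opt} --- are all mechanical. The one place that needs real care is the last norm comparison when $1\leq p<2$, where $\norm{\cdot}_2\leq\norm{\cdot}_p$ fails: there I would instead use the reverse direction of Khintchine's inequality, noting that $\fbpi{w}-\fb{v}$ is a Bernoulli linear form so $\norm{\fbpi{w}-\fb{v}}_2\leq\sqrt2\,\norm{\fbpi{w}-\fb{v}}_p$, which is precisely what costs the numerical constant and accounts for the factor $4$ in the statement.
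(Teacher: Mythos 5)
Your proof is correct and follows essentially the same route as the paper's: the identity $\av_{w\in N(v)}[\norm{\bfb_w}^2]-\norm{\bfb_v}^2=\av_{w\in N(v)}[\norm{\pi_{wv}(\bfb_w)-\bfb_v}^2]$, the pointwise bound for $v\in S_2$ rearranged to $\beta\av_{w\in N(v)}[\norm{\bfb_w}^2]\le\av_{w\in N(v)}[\norm{\pi_{wv}(\bfb_w)-\bfb_v}^2]$, the comparison $\norm{\cdot}_2\le\norm{\cdot}_p$ followed by Jensen and the budget $\av_{(v,w)\in E}[\norm{\fb{v}-\fbpi{w}}_p^p]\le\gamma_p^p/B$, and finally $1-\ind{S_2}(v)$ paired with the definition of $S_2$ to convert $\norm{\bfb_w}^2$ into $\norm{\bfb_v}^2$ at a cost of $(1-\beta)$. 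You also correctly observe that the argument gives the stronger constant $1$ in place of $4$ — this matches the paper's own derivation; the factor $4$ in the claim statement is simply slack, not (as you speculate in the final paragraph) a remnant of a $1\le p<2$ Khintchine constant, since the reduction and the surrounding lemmas are only stated and used for $p\ge 2$.
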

\begin{proof}
Since $\bfb_w = \av_{w \in N(v)}[\pi_{wv}(\bfb_{b_w})]$, being in $S_2$ means that on average, many
vectors $\bfb_w$ differ from $\bfb_v$. We use this to get a bound on the measure of $S_2$. We have
\begin{eqnarray*}
\norm{\bfb_v}^2  \leq (1-\beta) \cdot \hspace{-.2cm} \av_{w \in N(v)}\left[\norm{\bfb_w}^2\right] 
&\implies& 
\beta \cdot \hspace{-.2cm} \av_{w \in N(v)}\left[\norm{\bfb_w}^2\right]  \leq \hspace{-.2cm} \av_{w \in N(v)}\left[\norm{\bfb_w}^2\right] -\norm{\bfb_v}^2 \\
&\implies& 
\beta \cdot \hspace{-.2cm} \av_{w \in N(v)}\left[\norm{\bfb_w}^2\right]  \leq \hspace{-.2cm} \av_{w \in N(v)}\left[\norm{\pi_{wv}(\bfb_w) - \bfb_v}^2 \right],
\end{eqnarray*}
as $\norm{\bfb_w} = \norm{\pi_{wv}(\bfb_{w})}$ and that
$\bfb_v$ is the mean of $\pi_{wv}(\bfb_{w})$. Now, since $\norm{\bfb} = \norm{\fb{}}$,
we get that
\begin{eqnarray*}
\beta \cdot \hspace{-.2cm} \cdot \av_{(v,w) \in E}\left[ \ind{S_2}(v) \cdot \norm{\bfb_{w}}^2 \right]
&\leq& \av_{(v,w) \in E}\left[ \norm{\fb{v} - \fbpi{w}}_2^2\right] \\
&\leq& \av_{(v,w) \in E}\left[ \norm{\fb{v} - \fbpi{w}}_p^2\right] \qquad \quad \quad (\text{since}~ \norm{f}_2 \leq \norm{f}_p) \\
&\leq& \left(\av_{(v,w) \in E}\left[ \norm{\fb{v} - \fbpi{w}}_p^p\right] \right)^{\nfrac{2}{p}} ~\quad \text{(using Jensen's Inequality)}\\
&\leq& \nfrac{\gamma_p^2}{B^{\nfrac{2}{p}}}
\end{eqnarray*}
where we used the assumption that
$\av_{(v,w) \in E}\left[ \norm{\fb{v} - \fbpi{w}}_p^p\right] \leq \nfrac{\gamma_p^p}{B}$.
This gives that
\begin{align*}
\av_{(v,w) \in E}\left[(1-\ind{S_2}(v)) \cdot \norm{\bfb_v}^2 \right] 
& \geq (1-\beta) \cdot \av_{(v,w) \in E}\left[(1-\ind{S_2}(v)) \cdot \norm{\bfb_w}^2 \right] \\
& \geq (1-\beta) \cdot \left(1 - \frac{\gamma_p^2}{\beta B^{\nfrac{2}{p}}}\right) \\
& \geq  1 - \beta - \frac{\gamma_p^2}{\beta B^{\nfrac{2}{p}}}.
\end{align*}
The second inequality above used that $\av_{(v,w) \in E}\left[ \norm{\bfb_w}^2 \right] = 1$ from
claim \ref{clm:ug-opt}.
\end{proof}

\begin{claim}\label{clm:S3}
$\av_{(v,w) \in E}\left[\ind{S_3}(v) \cdot \norm{\bfb_v}^2 \right]
~\leq~ \nfrac{16}{\tau^2} \cdot \nfrac{\gamma_p^2}{B^{\nfrac{2}{p}}}$.
\end{claim}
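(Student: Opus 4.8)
The plan is to prove the bound vertex by vertex: for each $v\in S_3$ I will show that $\norm{\bfb_v}^2 \le \frac{16}{\tau^2}\,\av_{w\in N(v)}\big[\norm{\bfb_v-\pi_{wv}(\bfb_w)}^2\big]$, and then average over a random edge and plug in the bound $\av_{(v,w)\in E}\big[\norm{\fb{v}-\fbpi{w}}_p^p\big]\le \gamma_p^p/B$ supplied by Claim~\ref{clm:ug-opt}, following the same route as the proof of Claim~\ref{clm:S2}.

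First I would fix $v\in S_3$ and let $i$ be a coordinate witnessing its membership, so that $\abs{b_{v,i}}\ge \tau\norm{\bfb_v}$ while at least a $\tfrac34$ fraction of the neighbours $w$ (call them \emph{good}) satisfy $\abs{b_{w,\pi_{vw}(i)}}<\tfrac{\tau}{4}\norm{\bfb_w}$. Since the $i$-th coordinate of $\pi_{wv}(\bfb_w)$ is exactly $b_{w,\pi_{vw}(i)}$, I claim every good $w$ contributes $\norm{\bfb_v-\pi_{wv}(\bfb_w)}^2\ge \tfrac{\tau^2}{4}\norm{\bfb_v}^2$. To see this I would split on the size of $\norm{\bfb_w}$: if $\norm{\bfb_w}\le 2\norm{\bfb_v}$, then the $i$-th coordinate of $\bfb_v-\pi_{wv}(\bfb_w)$ already has absolute value at least $\tau\norm{\bfb_v}-\tfrac{\tau}{4}\cdot 2\norm{\bfb_v}=\tfrac{\tau}{2}\norm{\bfb_v}$; and if $\norm{\bfb_w}>2\norm{\bfb_v}$, the triangle inequality gives $\norm{\bfb_v-\pi_{wv}(\bfb_w)}\ge \norm{\bfb_w}-\norm{\bfb_v}>\norm{\bfb_v}$, which dominates $\tfrac{\tau}{2}\norm{\bfb_v}$ since $\tau<\tfrac{1}{2}$. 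Averaging over $w\in N(v)$ and using that the good neighbours carry mass at least $\tfrac34$ then yields $\av_{w\in N(v)}[\norm{\bfb_v-\pi_{wv}(\bfb_w)}^2]\ge \tfrac{3}{16}\tau^2\norm{\bfb_v}^2$, i.e.\ the desired per-vertex inequality with constant $\tfrac{16}{3\tau^2}\le \tfrac{16}{\tau^2}$.

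Finally I would multiply through by $\ind{S_3}(v)$, take the expectation over a random edge $(v,w)\in E$ (legitimate since $\norm{\bfb_v}^2$ does not depend on $w$ and $\av_{w\in N(v)}$ is the conditional edge distribution), drop the indicator, and use that $\bfb\mapsto f_\bfb$ is an $\ell_2$-isometry so that $\norm{\bfb_v-\pi_{wv}(\bfb_w)}_2=\norm{\fb{v}-\fbpi{w}}_2$. Then $\norm{\cdot}_2\le\norm{\cdot}_p$ followed by Jensen's inequality — exactly as in the proof of Claim~\ref{clm:S2} — gives $\av_{(v,w)\in E}[\norm{\fb{v}-\fbpi{w}}_2^2]\le \big(\av_{(v,w)\in E}[\norm{\fb{v}-\fbpi{w}}_p^p]\big)^{2/p}\le \gamma_p^2/B^{2/p}$, and the claim follows. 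The only step requiring genuine care is the case analysis on $\norm{\bfb_w}$: for a neighbour whose norm greatly exceeds $\norm{\bfb_v}$ the "heavy coordinate" comparison is vacuous and one must instead fall back on the triangle inequality. Everything else is the routine inequality chain already displayed for $S_2$; note also that the argument never uses $v\notin S_2$, which is folded into the definition of $S_3$ only for convenience elsewhere.
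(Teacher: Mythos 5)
Your proof is correct and follows the same overall route as the paper: establish a per-vertex inequality bounding $\norm{\bfb_v}^2$ by a multiple of $\av_{w\in N(v)}[\norm{\bfb_v-\pi_{wv}(\bfb_w)}^2]$, then pass to the edge average and finish with the same chain (the $\ell_2$-isometry $\bfb\mapsto f_\bfb$, $\norm{\cdot}_2\le\norm{\cdot}_p$, Jensen, and the bound $\av_{(v,w)}[\norm{\fb{v}-\fbpi{w}}_p^p]\le\gamma_p^p/B$ from Claim~\ref{clm:ug-opt}) that was already used for $S_2$. The one genuine difference is in the per-vertex step. The paper uses the hypothesis $v\notin S_2$ together with Markov's inequality to bound $\Pr_{w\in N(v)}[\norm{\bfb_w}>2\norm{\bfb_v}]$ by $\frac{1}{4-4\beta}$, then union-bounds this against the $S_3$ condition to obtain a $>\tfrac14$ mass of ``good'' neighbours; you instead treat the neighbours with $\norm{\bfb_w}>2\norm{\bfb_v}$ directly via the reverse triangle inequality $\norm{\bfb_v-\pi_{wv}(\bfb_w)}\ge\norm{\bfb_w}-\norm{\bfb_v}>\norm{\bfb_v}$, which makes the $v\notin S_2$ hypothesis superfluous and keeps a $\tfrac34$ mass of good neighbours. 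This yields the slightly stronger per-vertex constant $\tfrac{16}{3\tau^2}$ in place of $\tfrac{16}{\tau^2}$ (both of course suffice for the stated claim), at the cost of one extra case split. Your closing observation that the proof does not actually use $v\notin S_2$ is correct, and is a small simplification over the paper's argument.
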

\begin{proof}
Consider a vertex $v \in S_3$. Since we know that $v \notin S_2$, we get that
\begin{equation*}
\Pr_{w \in N(v)}\left[\norm{\bfb_w} \geq 2\norm{\bfb_v} \right]
  ~~\leq~~ \frac{\av_{w \in N(v)}\left[\norm{\bfb_w}^2\right]}{4\norm{\bfb_v}^2}
  ~~\leq~~ \frac{1}{4-4\beta}.
\end{equation*}

Fix and $i \in [R]$ such that $\abs{b_{v,i}} \geq \tau \cdot  \norm{\bfb_v}$ and
$\Pr_{w \in N(v)}\left[\abs{b_{w,\pi_{vw}(i)}} \geq \nfrac{\tau}{4}\cdot \norm{\bfb_w} \right] \leq \nfrac14$.
By a union bound,
\begin{equation*}
\Pr_{w \in N(v)}\left[\norm{\bfb_w} \leq 2\norm{\bfb_v} ~~\text{and}~~
  \abs{b_{w,\pi_{vw}(i)}} \leq \nfrac{\tau}{4}\cdot \norm{\bfb_w} \right]
  ~~\geq~~ 1 - \nfrac14 - \nfrac{1}{(4-4\beta)} ~~>~~ \nfrac14.
\end{equation*}

Using this we can again say that $\norm{\bfb_{v} - \pi_{wv}(\bfb_w)}$ must be large on
average and, hence, derive a bound on the measure of $S_3$.
\begin{eqnarray*}
\av_{w \in N(v)}\left[ \norm{\bfb_v - \pi_{wv}(\bfb_w)}^2\right]
&\geq& \av_{w \in N(v)} \left[\abs{b_{v,i} - b_{w,\pi(i)}}^2\right] \\
&\geq& \nfrac14 \cdot \abs{\tau\cdot  \norm{\bfb_v} - \nfrac{\tau}{4} \cdot 2\norm{\bfb_v}}^2 \\
&\geq& \nfrac{\tau^2}{16}\cdot \norm{\bfb_v}^2.
\end{eqnarray*}

As in the previous claim, we use this to conclude that
\begin{eqnarray*}
\av_{(v,w) \in E}\left[\ind{S_3}(v) \cdot \norm{\bfb_v}^2 \right]
&\leq& \nfrac{16}{\tau^2} \cdot \av_{(v,w) \in E}\left[ \norm{\fb{v} - \fbpi{w}}_2^2 \right] \\
&\leq& \nfrac{16}{\tau^2} \left(\av_{(v,w) \in E}\left[ \norm{\fb{v} - \fbpi{w}}_p^p \right]\right)^{\nfrac{2}{p}}
~~\leq~~ \nfrac{16}{\tau^2} \cdot \nfrac{\gamma_p^2}{B^{\nfrac{2}{p}}}.
\end{eqnarray*}
\end{proof}

\begin{claim}\label{clm:S4}
$\av_{(v,w) \in E}\left[\ind{S_4}(v) \right] \leq \nfrac{64\delta}{\tau^2}$.
\end{claim}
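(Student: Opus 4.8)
The plan is a standard decoding (randomized rounding) argument: from any optimal solution $\{\bfb_w\}_{w \in W}$ we extract a labeling of $\U$ that satisfies, in expectation, at least $\nfrac{\tau^2}{64}$ fraction of the edges incident to $S_4$; since $\val(\U) \leq \delta$, this forces $\av_{(v,w) \in E}[\ind{S_4}(v)] \leq \nfrac{64\delta}{\tau^2}$. For $v \in V$ with $\bfb_v \neq 0$ let $C(v) \defeq \{i \in [R] : |b_{v,i}| \geq \tau\norm{\bfb_v}\}$, and for $w \in W$ let $C(w) \defeq \{j \in [R] : |b_{w,j}| \geq \nfrac{\tau}{4}\cdot\norm{\bfb_w}\}$ (and $C(w) \defeq \emptyset$ if $\bfb_w = 0$). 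Since $\sum_i b_{v,i}^2 = \norm{\bfb_v}^2$ and every coordinate counted in $C(v)$ contributes at least $\tau^2\norm{\bfb_v}^2$ to this sum, $|C(v)| \leq \nfrac{1}{\tau^2}$, and likewise $|C(w)| \leq \nfrac{16}{\tau^2}$. Sample a labeling $\L$ by choosing $\L(v)$ uniformly from $C(v)$ and $\L(w)$ uniformly from $C(w)$, independently over all vertices (the choice being irrelevant for the bound below when a candidate set is empty).

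Fix $v \in S_4$. Since $v \notin S_2$ we have $\bfb_v \neq 0$ (otherwise $\norm{\bfb_v}^2 = 0 \leq (1-\beta)\av_{w \in N(v)}[\norm{\bfb_w}^2]$ would place $v$ in $S_2$), and since $v \notin S_1$ the set $C(v)$ is nonempty. Because $v \notin S_2 \cup S_3$, unwinding the definition of $S_3$ shows that \emph{every} $i \in C(v)$ satisfies $\Pr_{w \in N(v)}[\,|b_{w,\pi_{vw}(i)}| \geq \nfrac{\tau}{4}\cdot\norm{\bfb_w}\,] > \nfrac14$, i.e.\ $\Pr_{w \in N(v)}[\pi_{vw}(i) \in C(w)] > \nfrac14$. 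Hence, for $w$ drawn from $N(v)$ according to the edge distribution,
\begin{align*}
\av_{w \in N(v)}\Big[\Pr_\L\big[\L(w) = \pi_{vw}(\L(v))\big]\Big]
 &= \frac{1}{|C(v)|}\sum_{i \in C(v)} \av_{w \in N(v)}\!\left[\frac{\ind{\pi_{vw}(i) \in C(w)}}{|C(w)|}\right]\\
 &\geq \frac{\tau^2}{16\,|C(v)|}\sum_{i \in C(v)} \Pr_{w \in N(v)}[\pi_{vw}(i) \in C(w)] ~\geq~ \frac{\tau^2}{64}.
\end{align*}
The factor $|C(v)|$ cancels precisely because the $S_3$ condition holds for \emph{all} heavy coordinates of $\bfb_v$, not just one.

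Averaging this over a random edge $(v,w) \in E$ (each $v$ weighted by its share of edges) gives $\av_\L[\text{fraction of edges satisfied}] \geq \av_{(v,w) \in E}\big[\ind{S_4}(v) \cdot \Pr_\L[(v,w) \text{ satisfied}]\big] \geq \nfrac{\tau^2}{64}\cdot\av_{(v,w)\in E}[\ind{S_4}(v)]$. In particular some fixed labeling achieves this value, so $\val(\U) \geq \nfrac{\tau^2}{64}\cdot\av_{(v,w)\in E}[\ind{S_4}(v)]$; combined with $\val(\U) \leq \delta$ this yields $\av_{(v,w)\in E}[\ind{S_4}(v)] \leq \nfrac{64\delta}{\tau^2}$, as claimed.

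The only real obstacle is getting $\tau^2$ rather than $\tau^4$ in the final bound: the naive decoding—fix one heavy coordinate $i$ of $v$, hope its image lands in $C(w)$ and that $\L(w)$ hits it—loses an extra factor $|C(v)| \leq \nfrac{1}{\tau^2}$, since $\L(v)$ is only that one coordinate with probability $\nfrac{1}{|C(v)|}$. Avoiding this loss requires that \emph{every} heavy coordinate of $\bfb_v$ be a usable label whenever $v \notin S_2 \cup S_3$, which is exactly how the set $S_3$ (and hence $S_4$) is engineered; everything else is routine bookkeeping about candidate-set cardinalities and about degenerate vertices with zero vectors, which only weaken the lower bound and so are harmless.
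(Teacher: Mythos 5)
Your proof is correct and follows essentially the same decoding argument as the paper: define candidate sets of $\tau$-heavy coordinates for $v$ and $\tau/4$-heavy coordinates for $w$, build a labeling from them, lower-bound the satisfaction probability of edges with $v \in S_4$ by $\tau^2/64$, and conclude from $\val(\U) \leq \delta$. The only real difference is that you randomize $\L(v)$ uniformly over $C(v)$ (which requires the observation that $v \notin S_3$ makes \emph{every} heavy coordinate usable), whereas the paper fixes $\L(v)$ deterministically to the single witness $i$ guaranteed by $v \notin S_3$ — so your closing remark that the "fix one heavy coordinate" decoder loses a factor $|C(v)|$ is mistaken: the paper's deterministic choice incurs no such loss and in fact needs only the weaker "exists $i$" consequence of $v \notin S_3$. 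This slip is in your commentary, not in the proof itself, so the argument stands.
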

\begin{proof}
Since $v \notin S_1 \cup S_2 \cup S_3$, we know that
\begin{equation*}
\exists i \in [R] ~~~\text{such that}~~
\Pr_{w \in N(v)} \left[ \abs{b_{w, \pi_{vw}(i)}} \geq \nfrac{\tau}{4}\cdot \norm{\bfb_w}\right] \geq \nfrac14.
\end{equation*}

Construct a labeling for $\U$ by assigning to each $v \in V$, the special label $i$ as above,
and to each $w \in W$, a random label $j$ satisfying $\abs{b_{w,j}} \geq \nfrac{\tau}{4}\cdot \norm{\bfb_w}$.
For, $w \in W$ when no such $j$ exists or for $v \notin S_4$, we fix a label arbitrarily.

Note that there can be at most $\nfrac{16}{\tau^2}$ choices of $j$ satisfying
$\abs{b_{w,j}} \geq \nfrac{\tau}{4}\cdot \norm{\bfb_w}$. By the condition on $i$, we know that,
in expectation, the labeling satisfies $\nfrac14 \cdot \nfrac{\tau^2}{16}$ fraction of
the edges incident on a $v \in S_4$. Since the fraction of edges satisfied overall is at most
$\delta$, we get that
\begin{equation*}
\av_{(v,w) \in E}\left[\ind{S_4}(v) \cdot \nfrac{\tau^2}{64} \right] \leq \delta
~~\implies~~ \av_{(v,w) \in E}\left[\ind{S_4}(v) \right] \leq \nfrac{64\delta}{\tau^2}.
\end{equation*}
\end{proof}

Let $\nu$ denote $10\tau \cdot \inparen{\log(\nfrac{1}{\tau})}^{\nfrac{p}{2}}$.
Using these estimates, we can now prove the soundness of the reduction.

\begin{lemma} \label{lemma:sound}
If $\val(\U) < \delta$, then for the reduction with parameters $B, \tau$ and $\beta = \tau^2$
\begin{equation*}
(\opt)^p ~\geq~
\gamma_p^p \cdot \left(
1 ~-~ \nu - \frac{p\tau^2}{2} 
~-~  \frac{10 p\cdot \gamma_p^2}{\tau^2 B^{\nfrac{2}{p}}} 
~-~ \frac{p\gamma_p^2}{2}\left(\frac{64\delta}{\tau^2}\right)^{\nfrac{(p-2)}{p}}
\right)
\end{equation*}
\end{lemma}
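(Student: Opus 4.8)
The plan is to lower bound $(\opt)^p$ by its first summand $\av_{(v,w)\in E}\left[\norm{\fb{v}}_p^p\right]$ (the $B$-weighted term is nonnegative), and inside that sum to retain only the contribution of the ``good'' vertices $S_1$. Fix an optimal solution $\{\bfb_w\}$; by Claim~\ref{clm:ug-opt} we may assume $\av_{(v,w)\in E}[\norm{\bfb_w}^2]=1$ and $\av_{(v,w)\in E}[\norm{\fb{v}-\fbpi{w}}_p^p]\le\gamma_p^p/B$, and moreover $(\opt)^p\le\gamma_p^p$ via the feasible solution $\bfb_w\equiv(\nfrac{1}{\sqrt R},\dots,\nfrac{1}{\sqrt R})$ exhibited there. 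By inequality~\eqref{eqn:S1}, $\norm{\fb{v}}_p^p\ge\gamma_p^p(1-\nu)\norm{\bfb_v}^p$ for $v\in S_1$, so it suffices to prove
\[
\av_{(v,w)\in E}\left[\ind{S_1}(v)\cdot\norm{\bfb_v}^p\right]~\ge~1-\frac{p\tau^2}{2}-\frac{10p\gamma_p^2}{\tau^2B^{\nfrac{2}{p}}}-\frac{p\gamma_p^2}{2}\left(\frac{64\delta}{\tau^2}\right)^{\nfrac{(p-2)}{p}},
\]
after which the lemma follows from $(1-\nu)(1-y)\ge 1-\nu-y$.

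First I would establish the corresponding \emph{second-moment} estimate. Since $S_1,S_2,S_3,S_4$ cover $V$, the sets $S_2,S_3,S_4$ are pairwise disjoint (by construction $S_3\subseteq V\setminus S_2$ and $S_4$ is the complement of $S_1\cup S_2\cup S_3$), and $S_1\cap(S_3\cup S_4)=\emptyset$ (a vertex of $S_1$ has all $|b_{v,i}|<\tau\norm{\bfb_v}$), one has the pointwise bound $\ind{S_1}\ge 1-\ind{S_2}-\ind{S_3}-\ind{S_4}$. Multiplying by $\norm{\bfb_v}^2\ge 0$, taking $\av_{(v,w)\in E}$, and substituting Claims~\ref{clm:S2},~\ref{clm:S3},~\ref{clm:S4} with the choice $\beta=\tau^2$ gives
\[
\av_{(v,w)\in E}\left[\ind{S_1}(v)\norm{\bfb_v}^2\right]~\ge~1-\tau^2-\frac{20\gamma_p^2}{\tau^2B^{\nfrac{2}{p}}}-\av_{(v,w)\in E}\left[\ind{S_4}(v)\norm{\bfb_v}^2\right].
\]
For the last term, Claim~\ref{clm:S4} bounds only the \emph{measure} of $S_4$, so I would use H\"older's inequality with exponents $\nfrac{p}{(p-2)}$ and $\nfrac{p}{2}$ together with the a priori bound $\av_{(v,w)\in E}[\norm{\bfb_v}^p]\le\av_{(v,w)\in E}[\norm{\fb{v}}_p^p]\le(\opt)^p\le\gamma_p^p$ to deduce $\av_{(v,w)\in E}[\ind{S_4}(v)\norm{\bfb_v}^2]\le\gamma_p^2(64\delta/\tau^2)^{\nfrac{(p-2)}{p}}$.

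Next I would convert this into the required $p$-th moment bound. Writing $\mu$ for the distribution on $V$ induced by a uniformly random edge, restricting and renormalizing $\mu$ to $S_1$ and applying Jensen's inequality to the convex map $t\mapsto t^{\nfrac{p}{2}}$ yields $\av_\mu[\ind{S_1}\norm{\bfb_v}^p]\ge\mu(S_1)^{1-\nfrac{p}{2}}\big(\av_\mu[\ind{S_1}\norm{\bfb_v}^2]\big)^{\nfrac{p}{2}}\ge\big(\av_\mu[\ind{S_1}\norm{\bfb_v}^2]\big)^{\nfrac{p}{2}}$, using $\mu(S_1)\le 1$ and $p\ge 2$. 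Since the second-moment bracket lies in $[0,1]$, Bernoulli's inequality $(1-x)^{\nfrac{p}{2}}\ge 1-\nfrac{p}{2}\cdot x$ turns the displayed estimate into the desired first display, and collecting terms proves the lemma. The only genuinely non-routine step is the treatment of $S_4$: because Claim~\ref{clm:S4} controls only how often $S_4$ occurs and not the size of $\norm{\bfb_v}$ on it, one has to import the missing integrability from the global bound $(\opt)^p\le\gamma_p^p$ and pay for it through H\"older, which is precisely what produces the exponent $\nfrac{(p-2)}{p}$ and the factor $\gamma_p^2$ in the statement; everything else is constant-chasing (notably $\beta=\tau^2$) plus the standard sub-probability-measure Jensen step forced on us by the mismatch between the $\norm{\bfb_v}^p$ appearing in \eqref{eqn:S1} and the $\norm{\bfb_v}^2$ appearing in Claims~\ref{clm:S2}--\ref{clm:S4}.
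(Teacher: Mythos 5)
Your proposal is correct and follows essentially the same route as the paper's proof: lower bound $(\opt)^p$ by the contribution of $S_1$ via inequality (\ref{eqn:S1}), use $\ind{S_1} \geq 1 - \ind{S_2} - \ind{S_3} - \ind{S_4}$ together with Claims~\ref{clm:S2},~\ref{clm:S3},~\ref{clm:S4}, handle $S_4$ via H\"older with exponents $\nfrac{p}{(p-2)},\ \nfrac{p}{2}$ and the a priori bound $\av[\norm{\bfb_v}^p]\le\gamma_p^p$, then pass from second to $p$-th moment by Jensen and finish with Bernoulli's inequality. The only cosmetic difference is that you phrase the moment comparison through restriction/renormalization of $\mu$ to $S_1$ (using $\mu(S_1)^{1-p/2}\ge 1$), whereas the paper uses the equivalent observation that $\ind{S_1}^{p/2}=\ind{S_1}$ and applies Jensen directly to $\av[(\ind{S_1}\norm{\bfb_v}^2)^{p/2}]$.
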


\begin{proof}
Using (\ref{eqn:S1}) we have that
\begin{equation*}
(\opt)^p 
~\geq~ \av_{(v,w) \in E} \left[ \ind{S_1}(v) \cdot \gamma_p^p \cdot (1-\nu) \norm{\bfb_v}_2^p\right]
~\geq~ \gamma_p^p \cdot (1-\nu) \cdot \left( \av_{(v,w) \in E} \left[ \ind{S_1}(v) \cdot \norm{\bfb_v}_2^2\right] \right)^{p/2}.
\end{equation*}

We lower bound $\ind{S_1}(v)$ by
$1 - \ind{S_2}(v) - \ind{S_3}(v) - \ind{S_4}(v)$. Claims \ref{clm:S2} and \ref{clm:S3}
and give bounds on the first two terms (with $\beta = \tau^2$).
\begin{align*}
\av_{(v,w) \in E}\left[(1-\ind{S_2}(v)) \cdot \norm{\bfb_v}^2 \right] & 
~~\geq~~ 1- \tau^2 - \frac{4\gamma_p^2}{\tau^2 B^{\nfrac{2}{p}}},\\
\av_{(v,w) \in E}\left[ \ind{S_3} \norm{\bfb_v}^2\right] & 
~~\leq~~ \frac{16  \gamma_p^2}{\tau^2 B^{\nfrac{2}{p}}}
\end{align*}

We bound the third term using Claim \ref{clm:S4} and H\"{o}lder's inequality
\begin{align*}
\av_{(v,w) \in E}\left[ \ind{S_4} \norm{\bfb_v}^2\right] 
& ~\leq~ \left(\av_{(v,w) \in E}\left[ \ind{S_4}(v)\right]\right)^{\nfrac{(p-2)}{p}} \left(\av_{(v,w) \in E}\left[ \norm{\bfb_v}^p\right]\right)^{\nfrac{2}{p}} \\
& ~\leq~ \inparen{\frac{64\delta}{\tau^2}}^{\nfrac{(p-2)}{p}} \cdot \gamma_p^2,
\end{align*}
where the last bound used that since $\opt \leq \gamma_p$ (see Claim \ref{clm:ug-opt}), we must have
\begin{align*}
\av_{(v,w) \in E}\left[ \norm{\bfb_v}_2^p\right] 
~=~ \av_{(v,w) \in E}\left[\norm{f_{b_v}}_2^p\right]
~\leq~ \av_{(v,w) \in E}\left[\norm{f_{b_v}}_p^p\right] 
~\leq~ \gamma_p^p .
\end{align*}

Combining the bounds for the above three terms proves the lemma.
\end{proof}

For a small constant $\eta$ such that $\eta (\log(\nfrac{1}{\eta}))^{\nfrac{p}{2}} < \nfrac{2^{-\nfrac{p}{2}}}{50}$, choosing parameters as
\begin{align*}
\tau \defeq  \eta^2/p, & \qquad\delta \defeq \inparen{\frac{\eta}{p\gamma_p^2}}^{\nfrac{p}{(p-2)}} \cdot \frac{\tau^2}{64},\\
B \defeq \inparen{\frac{40p\gamma_p^2}{\eta \cdot \tau^2}}^p,   & \qquad \text{and} \quad \epsilon \defeq \frac{\eta}{2^p \cdot B}\\
\end{align*}
in Lemma \ref{lemma:sound} would imply that $\opt \leq 1+\eta$ in the completeness case and $\opt
\geq \gamma_p \cdot (1-\eta)$ in the soundness case. This gives the following theorem.
\begin{theorem}
For any $p \geq 2$ and sufficiently small constant $\eta$, there exist constants $\epsilon, \delta > 0$
and a reduction from \ULC to \subspace{$n-1$}{$p$} such that if $\val(\U)$ is
the fraction of edges satisfiable in the given instance of \ULC and $\opt$ is the optimum of the
instance of \subspace{$n-1$}{$p$}, then
\begin{align*}
\val(\U) \geq 1-\epsilon & \implies \opt \leq 1 + \eta \quad \text{and} \\
\val(\U) \leq \delta & \implies \opt \geq \gamma_p \cdot (1-\eta).
\end{align*}
\end{theorem}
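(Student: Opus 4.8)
The theorem is just the packaging of the two halves already proved — the completeness claim and the soundness bound of Lemma~\ref{lemma:sound} — with the explicit parameter assignment displayed immediately above it. So the plan is: fix $\beta=\tau^2$ and the stated values of $\tau,B,\epsilon,\delta$, then verify the two implications come out with the advertised constants. Throughout I would assume $p>2$: the exponent $p/(p-2)$ in the choice of $\delta$ forces this, and for $p=2$ the claimed ratio $\gamma_2=1$ makes the statement vacuous anyway.

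\emph{Completeness.} With $\epsilon=\eta/(2^p B)$, the completeness claim gives $(\opt)^p\le 1+\epsilon\cdot B\cdot 2^p = 1+\eta$ whenever $\val(\U)\ge 1-\epsilon$, and since $(1+\eta)^{1/p}\le 1+\eta$ for $\eta\ge 0$ this yields $\opt\le 1+\eta$. The only point to check is that $\epsilon$ is a genuine positive constant once $\eta$ and $p$ are fixed, which it is because $B$ is a (large but) fixed constant.

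\emph{Soundness.} Here I would substitute $\beta=\tau^2$ into Lemma~\ref{lemma:sound} and then show that each of the four subtracted terms $\nu$, $p\tau^2/2$, $10p\gamma_p^2/(\tau^2 B^{2/p})$, and $\tfrac{p\gamma_p^2}{2}(64\delta/\tau^2)^{(p-2)/p}$ is $O(\eta)$. With $\tau=\eta^2/p$ one has $p\tau^2/2=\eta^4/(2p)\le\eta$; since $\tau=\eta^2/p\ll\eta$, the hypothesis $\eta(\log(1/\eta))^{p/2}<2^{-p/2}/50$ forces $\eta$ (hence $\tau$) small enough that $\nu=10\tau(\log(1/\tau))^{p/2}\le\eta$; the value $B=(40p\gamma_p^2/(\eta\tau^2))^p$ is engineered precisely so that $B^{2/p}$ cancels the factor $\gamma_p^2/\tau^2$, leaving $10p\gamma_p^2/(\tau^2 B^{2/p})=\eta^2/(160 p\gamma_p^2)\le\eta$; and $\delta=(\eta/(p\gamma_p^2))^{p/(p-2)}\cdot\tau^2/64$ is chosen so that $(64\delta/\tau^2)^{(p-2)/p}=\eta/(p\gamma_p^2)$, making the last term exactly $\eta/2$. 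Summing, $(\opt)^p\ge\gamma_p^p(1-c\eta)$ for an absolute constant $c$, hence $\opt\ge\gamma_p(1-c\eta)^{1/p}\ge\gamma_p(1-c\eta)$; replacing $\eta$ by $\eta/c$ from the start (which only tightens the completeness conclusion) gives the two stated implications, and invoking the \UGC to make the gap version of \ULC NP-hard finishes the theorem.

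The main obstacle is not any single estimate but the \emph{simultaneous} balancing of the parameters and the non-circularity of the chain: $\tau$ is pinned down first (in terms of $\eta$ and $p$), then $B$ is chosen from $\tau$ and $\eta$ to kill the $B^{2/p}$ terms, then $\delta$ must be taken small enough in terms of $\tau,\gamma_p,\eta$ — with the awkward exponent $p/(p-2)$, which is what forces $p>2$ and is the reason the reduction degrades as $p\to 2$ — without collapsing to $0$, and finally $\epsilon$ is read off from $B$. One has to confirm every quantity stays a fixed positive constant and that the residual constant $c$ from adding up the four error terms can be harmlessly absorbed into $\eta$; this bookkeeping, rather than any conceptual difficulty, is where the care goes.
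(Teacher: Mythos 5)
Your proposal is correct and matches the paper's argument exactly: the theorem is proved simply by substituting the displayed parameter choices into the completeness bound of Claim~\ref{clm:ug-opt} and the soundness bound of Lemma~\ref{lemma:sound}, and you verify each error term is $O(\eta)$ along the same lines as the paper. One tiny arithmetic slip --- $10p\gamma_p^2/(\tau^2 B^{2/p})$ actually comes out to $\eta^2\tau^2/(160p\gamma_p^2)$, not $\eta^2/(160p\gamma_p^2)$ --- but the extra $\tau^2$ only makes the term smaller, so the bound and the conclusion are unaffected (and with the paper's parameters the error sum already comes in under $\eta$, so your final ``replace $\eta$ by $\eta/c$'' safety step is not even needed).
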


\section*{Acknowledgments}
We thank Kasturi Varadarajan for initiating the work on this problem by suggesting that
we generalize the algorithm of \cite{VVYZ} and for generous
help with an early draft of this paper on which he was offered a co-authorship
(but later opted out). MT would also like to thank David Steurer for helpful discussions.
We thank the anonymous reviewers of this manuscript for their suggestions and references,
and for pointing out an error in the previous proof of Lemma \ref{lemma:sound}.

\bibliography{subspace-ref}


\appendix

\section{Proof of Theorem \ref{thm:discrete-gap}}
\renewcommand{\thetheorem}{A.\arabic{theorem}}
\setcounter{theorem}{0}  

We restate the theorem below.

\begin{theorem} \label{thm:discrete-gap-appendix}
Given any $\eta > 0$, there exist $m_{0}, n_{0} \in \Z$ such that for all $m \geq m_{0}$ and $n \geq n_{0}$, if we pick i.i.d. random points $a_{1}, a_{2}, \dotsc, a_{m} \in \R^{n}$ with each point having i.i.d. $N(0,1)$ coordinates, then with some non-zero probability,
\[
\min_{\norm{z}_{2}=1} \left(\frac{1}{m} \sum_{i=1}^{m} \abs{\mydot{a_{i}}{z}}^{p}\right)^{\nfrac{1}{p}} \geq (1-\eta) \cdot \gamma_{p} \cdot \min_{\substack{I \succcurlyeq X \succcurlyeq 0 \\ \tr(X)=1}} \left(\int_{a \in \R^n} \abs{a_{i}^{T} X a_{i}}^{\nfrac{p}{2}}\right)^{\nfrac{1}{p}}.
\]
In other words, there exist points $b_{1}, b_{2}, \dotsc, b_{m} \in \R^{n}$, where $b_{i} \defeq  m^{-\nfrac{1}{p}} \cdot a_{i}$, giving the desired integrality gap example.
\end{theorem}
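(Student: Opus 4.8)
The plan is to show that for $n$ large enough, and then $m$ large enough as a function of $n$, the random sample $a_1,\dots,a_m$ satisfies, with probability bounded away from $0$, \emph{both} (i)~$\min_{\norm{z}_2=1}\inparen{\frac1m\sum_i\abs{\mydot{a_i}{z}}^p}^{\nfrac1p}\ge\gamma_p(1-\eta/3)$ and (ii)~$\min_{I\succcurlyeq X\succcurlyeq0,\,\tr(X)=1}\inparen{\frac1m\sum_i(a_i^TXa_i)^{\nfrac p2}}^{\nfrac1p}\le 1+\eta/3$. Since $\gamma_p(1-\eta/3)\ge(1-\eta)\gamma_p(1+\eta/3)$ for every $\eta\in(0,1)$, this gives the asserted gap, and putting $b_i\defeq m^{-1/p}a_i$ absorbs the factor $\frac1m$ into the points. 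Both (i) and (ii) come from concentrating the empirical averages around their Gaussian-measure analogues, which are computed exactly in the proof of Theorem~\ref{thm:cont-gap}: the rank-one value there equals $\gamma_p$, and the relaxation value (via $X=\frac1nI$) is at most $(1+O(p)/n)^{1/2}$. So fix $n\ge n_0$ large enough that $(1+O(p)/n)^{1/2}\le 1+\eta/6$, and keep $n$ fixed henceforth.

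For (ii), feasibility of $X=\frac1nI$ gives the pointwise bound $\min_X(\cdots)^{1/p}\le\frac1{\sqrt n}\inparen{\frac1m\sum_i\norm{a_i}^p}^{1/p}$. Each $\norm{a_i}^p$ is nonnegative with finite mean $\mu_n\defeq 2^{p/2}\Gamma(\tfrac{n+p}2)/\Gamma(\tfrac n2)$ (the same $\Gamma$-integral as in Theorem~\ref{thm:cont-gap}) and finite variance, so by the law of large numbers (or a quantitative one-sided Bernstein bound for the sum) $\frac1m\sum_i\norm{a_i}^p\le(1+\eta')\mu_n$ with probability $1-o_m(1)$; substituting, this quantity is at most $(1+\eta')^{1/p}\bigl((2/n)^{p/2}\Gamma(\tfrac{n+p}2)/\Gamma(\tfrac n2)\bigr)^{1/p}\le(1+\eta')^{1/p}(1+O(p)/n)^{1/2}\le 1+\eta/3$ once $\eta'$ is small. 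This is the easy half.

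The main work is the \emph{uniform} lower bound (i). For a fixed unit $z$, the $\mydot{a_i}{z}$ are i.i.d.\ $N(0,1)$, so $Y_i\defeq\abs{\mydot{a_i}{z}}^p$ are i.i.d., nonnegative, with mean $\gamma_p^p$ and finite variance; since $\gamma_p^p-Y_i$ is bounded above by $\gamma_p^p$, a one-sided Bernstein inequality gives $\Pr\bigl[\frac1m\sum_iY_i\le\gamma_p^p(1-\delta)\bigr]\le e^{-c_{p,\delta}m}$. To pass to all $z$, note $h(z)\defeq\inparen{\frac1m\sum_i\abs{\mydot{a_i}{z}}^p}^{1/p}=m^{-1/p}\norm{Az}_p$ (with $a_i^T$ the rows of $A$) is a seminorm, hence $1$-homogeneous and $L$-Lipschitz on $\sph^{n-1}$ with $L\le m^{-1/p}\norm{A}_{2\to p}\le m^{-1/p}\sigma_{\max}(A)$, using $\norm{\cdot}_p\le\norm{\cdot}_2$ on $\R^m$ for $p\ge2$; moreover $\sigma_{\max}(A)\le\sqrt m+\sqrt n+t$ with probability $\ge1-e^{-t^2/2}$ by the standard Gaussian-matrix bound, so $L=O(m^{1/2-1/p})$ on that event (recall $m\gg n$). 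Take a $\rho$-net $\mathcal N$ of $\sph^{n-1}$ with $\abs{\mathcal N}\le(3/\rho)^n$ and $\rho\defeq\delta\gamma_p/(3L)$; a union bound shows $\min_{z\in\mathcal N}h(z)\ge\gamma_p(1-\delta)^{1/p}$ except with probability $\le(3/\rho)^n e^{-c_{p,\delta}m}$, which tends to $0$ as $m\to\infty$ because $\log(3/\rho)=O(\log m)$. On this event every unit $z$ lies within $\rho$ of some $z'\in\mathcal N$, so $h(z)\ge h(z')-L\rho\ge\gamma_p(1-\delta)^{1/p}-\delta\gamma_p/3\ge\gamma_p(1-\eta/3)$ for $\delta$ small. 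I expect this step to be the main obstacle: the crude estimate $h(z)\ge m^{-1/2}\sigma_{\min}(A)$ only yields $h(z)\gtrsim1$, which loses the factor $\gamma_p>1$ when $p>2$, so the net is genuinely needed; and the net resolution $\rho$ must be calibrated against the random, $m$-dependent Lipschitz constant, forcing $m$ to be large relative to $n\log m$ — that is, "$m\ge m_0$" must be read as $m_0=m_0(n)$. Finally, each of the finitely many failure events above has probability below $\tfrac13$ for $n,m$ large, so by a union bound the good event has positive probability; on it the discrete rank-one optimum is $\ge\gamma_p(1-\eta/3)\ge(1-\eta)\gamma_p(1+\eta/3)\ge(1-\eta)\gamma_p\cdot(\text{relaxation optimum})$, and rescaling to $b_i=m^{-1/p}a_i$ finishes the proof.
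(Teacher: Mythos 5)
Your proposal is correct and follows the same overall skeleton as the paper's proof --- concentration of the empirical objective at a fixed direction, then a net over $\sph^{n-1}$ with a Lipschitz/smoothness estimate to extend the bound to all directions, plus a separate concentration bound for the relaxation value via $X=\tfrac1nI$ --- but the two concentration tools you invoke differ from the paper's in a way worth noting. For the pointwise lower tail and for $\frac1m\sum_i\norm{a_i}^p$, the paper uses plain Chebyshev (so the failure probability is only $O(1/(m\epsilon^2))$), which forces it to take $m$ exponentially large in $n$ to beat the $(9/\delta)^n$ net size; you instead observe that $\gamma_p^p-\abs{\mydot{a_i}{z}}^p$ is bounded above, so a one-sided Bernstein bound gives an $e^{-c_{p,\delta}m}$ lower tail, and the union bound over the net closes with $m$ only polynomially large in $n$ --- a quantitatively sharper dependence of $m_0$ on $n$, though the theorem as stated does not care. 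For the net-to-sphere step, the paper expands $\abs{\mydot{a_i}{y}+\mydot{a_i}{z-y}}^p$ and controls the first-order error term $p\delta\sum_i\norm{a_i}^{p-1}$ by a second Chebyshev estimate, whereas you treat $z\mapsto m^{-1/p}\norm{Az}_p$ as a seminorm and bound its Lipschitz constant by $m^{-1/p}\sigma_{\max}(A)$ (valid since $\norm{\cdot}_p\le\norm{\cdot}_2$ for $p\ge2$), then invoke the Gaussian operator-norm concentration $\sigma_{\max}(A)\le\sqrt m+\sqrt n+t$. The seminorm route is cleaner and avoids the slightly delicate mean-value expansion of $\abs{\cdot}^p$; both correctly require $\rho$ (your net resolution, the paper's $\delta$) to shrink with $m$, making $m_0$ depend on $n$ exactly as you flag. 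One small caveat: your Lipschitz bound uses $\norm{\cdot}_p\le\norm{\cdot}_2$, which needs $p\ge 2$; the paper's statement is for general $p\ge1$, though in context $p\ge2$ is the regime of interest and the paper's own mean-value estimate is also implicitly tailored to $p\ge1$ with the $p$-power convex, so this is a minor restriction rather than a gap.
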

\begin{proof}
Let $a_{1}, a_{2}, \dotsc, a_{m}$ be i.i.d. random points in $\R^{n}$, where each point has i.i.d. $N(0, 1)$ coordinates. Then, as we have seen above
\begin{align*}
\expec{\abs{\mydot{a_{i}}{y}}^{p}} & = \int_{\R^{n}} \abs{\mydot{a}{y}}^{p} \mu(a) da = \gamma_{p}^{p}, & \qquad \text{for $y \in \sph^{n-1}$}, \\
\var{\abs{\mydot{a_{i}}{y}}^{p}} & = \expec{\abs{\mydot{a_{i}}{y}}^{2p}} - \expec{\abs{\mydot{a_{i}}{y}}^{p}}^{2} = \gamma_{2p}^{2p} - \gamma_{p}^{2p}, & \qquad \text{for $y \in \sph^{n-1}$}.
\end{align*}
By Chebyshev's Inequality,
\[
\prob{\frac{1}{m} \sum_{i=1}^{m} \abs{\mydot{a_{i}}{y}}^{p} \leq (1-\epsilon)\gamma_{p}^{p}} \leq \frac{(\gamma_{2p}^{2p} - \gamma_{p}^{2p})}{m\epsilon^{2}\gamma_{p}^{2p}}.
\]
Let $\mathcal{N}$ be any $\delta$-net of the unit sphere (i.e., $\mathcal{N} \subseteq \sph^{n-1}$ such that for any $z \in \sph^{n-1}$, there exists some $y \in \mathcal{N}$ such that $\norm{y-z}_{2} \leq \delta$), where $\delta$ is a parameter that will be picked later. It is known (e.g. see Claim 2.9 in \cite{FO05}) how to construct such $\delta$-nets of $\sph^{n-1}$ with size as small as $\size{\mathcal{N}} \leq (\frac{9}{\delta})^n$. Now using union bound over $\mathcal{N}$
\[
\prob{\frac{1}{m} \sum_{i=1}^{m} \abs{\mydot{a_{i}}{y}}^{p} \geq (1-\epsilon)\gamma_{p}^{p},~ \text{for all $y \in \mathcal{N}$}} \geq 1 - \frac{(\frac{9}{\delta})^n \cdot (\gamma_{2p}^{2p} - \gamma_{p}^{2p})}{m \epsilon^{2} \gamma_{p}^{2p}} > \frac{3}{4},
\]
as long as we choose $m$ large enough so that
\[
m > \frac{4 \cdot (\frac{9}{\delta})^n \cdot (\gamma_{2p}^{2p} - \gamma_{p}^{2p})}{\epsilon^{2} \gamma_{p}^{2p}}.
\]
For any $z \in \sph^{n-1}$, using $y \in \mathcal{N}$ closest to it
\begin{align*}
\sum_{i=1}^{m} \abs{\mydot{a_{i}}{z}}^{p} & = \sum_{i=1}^{m} \abs{\mydot{a_{i}}{y} + \mydot{a_{i}}{z-y}}^{p} \\
& \geq \sum_{i=1}^{m} \mydot{a_{i}}{y}^{p} - p \delta \sum_{i=1}^{m} \norm{a_{i}}_{2}^{p-1}.
\end{align*}
Therefore,
\[
\prob{\min_{\norm{z}_{2}=1} \frac{1}{m} \sum_{i=1}^{m} \abs{\mydot{a_{i}}{z}}^{p} \geq (1-\epsilon)\gamma_{p}^{p} - \frac{p\delta}{m}\sum_{i=1}^{m} \norm{a_{i}}_{2}^{p-1}} > \frac{3}{4}.
\]
But we also know that
\begin{align*}
\expec{\norm{a_{i}}_{2}^{p-1}} & = \int_{a \in \R^{n}} \norm{a}_{2}^{p-1} \mu(a) da = n^{(p-1)/2} (1+o(1)) \\
\var{\norm{a_{i}}_{2}^{p-1}} & = \expec{\norm{a_{i}}_{2}^{2p-2}} - \expec{\norm{a_{i}}_{2}^{p-1}}^{2} = n^{(p-1)} (1+o(1))
\end{align*}
By Chebyshev's Inequality,
\[
\prob{\frac{1}{m} \sum_{i=1}^{m} \norm{a_{i}}_{2}^{p-1} \geq (1+\epsilon) n^{(p-1)/2} (1+o(1))} \leq \frac{1+o(1)}{m\epsilon^{2}}.
\]
Hence,  choosing $m > \nfrac{5}{\epsilon^{2}}$, we have
\[
\prob{\frac{1}{m} \sum_{i=1}^{m} \norm{a_{i}}_{2}^{p-1} \leq (1+\epsilon) n^{(p-1)/2} (1+o(1))} \geq 1 - \frac{1+o(1)}{m\epsilon^{2}} > \frac{3}{4}.
\]
Putting these together,
\[
\prob{\min_{\norm{z}_{2}=1} \frac{1}{m} \sum_{i=1}^{m} \abs{\mydot{a_{i}}{z}}^{p} \geq (1-\epsilon)\gamma_{p}^{p} - p\delta (1+\epsilon) n^{(p-1)/2} (1+o(1))} > 3/4.
\]
Overall, choosing
\[
\epsilon \defeq \frac{\eta^{2}}{8}, \quad \delta \defeq \frac{\eta^{2} \gamma_{p}^{p}}{(8+\eta^{2}) p n^{(p-1)/2}}, \quad \text{and} \quad m > \max\left\{\frac{4 \delta^{-n} (\gamma_{2p}^{2p} - \gamma_{p}^{2p})}{\epsilon^{2} \gamma_{p}^{2p}}, \frac{5}{\epsilon^{2}}\right\},
\]
we get
\[
\prob{\min_{\norm{z}_{2}=1} \frac{1}{m} \sum_{i=1}^{m} \abs{\mydot{a_{i}}{z}}^{p} \geq \left(1 - \frac{\eta^{2}}{4}\right)\gamma_{p}^{p}} > \frac{1}{2}.
\]
On the other hand to analyze the value of the corresponding convex relaxation, we use
\begin{align*}
\expec{\norm{a_{i}}_{2}^{p}} & = \int_{a \in \R^{n}} \norm{a}_{2}^{p} \mu(a) da = n^{\nfrac{p}{2}} (1+o(1)) \\
\var{\norm{a_{i}}_{2}^{p}} & = \expec{\norm{a_{i}}_{2}^{2p}} - \expec{\norm{a_{i}}_{2}^{p}}^{2} = n^{p} (1+o(1))
\end{align*}
Again by Chebyshev's Inequality,
\[
\prob{\frac{1}{m} \sum_{i=1}^{m} \norm{a_{i}}_{2}^{p} \geq \left(1 + \nfrac{\eta}{2}\right) n^{\nfrac{p}{2}} (1+o(1))} \leq \frac{4(1+o(1))}{m\eta^{2}}.
\]
Choosing $m > \nfrac{9}{\eta^{2}}$, we get
\[
\prob{\frac{1}{m} \sum_{i=1}^{m} \norm{a_{i}}_{2}^{p} \leq \left(1 + \nfrac{\eta}{2}\right) n^{\nfrac{p}{2}} (1+o(1))} \geq 1 - \frac{4(1+o(1))}{m\eta^{2}} > \frac{1}{2}.
\]
Therefore, the convex relaxation satisfies
\[
\prob{\frac{1}{m} \sum_{i=1}^{m} \abs{\nfrac{1}{n} \cdot a_{i}^{T} I a_{i}}^{\nfrac{p}{2}} \leq \left(1 + \nfrac{\eta}{2}\right) (1+o(1))} > \frac{1}{2}.
\]
Hence,
\begin{align*}
& \prob{\min_{\norm{z}_{2}=1} \left(\frac{1}{m} \sum_{i=1}^{m} \abs{\mydot{a_{i}}{z}}^{p}\right)^{\nfrac{1}{p}} \geq (1-\eta)\cdot  \gamma_{p} \cdot \min_{\substack{I \succcurlyeq X \succcurlyeq 0 \\ \tr(X)=1}} \left(\frac{1}{m} \sum_{i=1}^{m} \abs{a_{i}^{T} X a_{i}}^{\nfrac{p}{2}}\right)^{\nfrac{1}{p}}} \\
& \geq \prob{\min_{\norm{z}_{2}=1} \frac{1}{m} \sum_{i=1}^{m} \abs{\mydot{a_{i}}{z}}^{p} \geq \left(1 - \nfrac{\eta^{2}}{4}\right)\cdot \gamma_{p}^{p} \quad \text{and} \quad \frac{1}{m} \sum_{i=1}^{m} \abs{\nfrac{1}{n} \cdot a_{i}^{T} I a_{i}}^{\nfrac{p}{2}} \leq \left(1 + \nfrac{\eta}{2}\right) (1+o(1))} \\
& > 0.
\end{align*}
\end{proof}

\section{NP-hardness of Subspace Approximation}\label{sec:np-hard}
\renewcommand{\thetheorem}{B.\arabic{theorem}}
\setcounter{theorem}{0}  

In this section, we show unconditionally that the problem \subspace{$n-1$}{$p$} is NP-hard, for $p > 2$, using a reduction from the Min-Uncut problem on 
graphs. Such a result was also obtained independently by Gibson and Xiao 
(personal communication).

\noindent {\sc Min-Uncut problem:} Given a graph $G=(V, E)$, find a bipartition of its vertices $V = S \cup T$ that minimizes the number of edges with both endpoints on the same side of the bipartition.

Let $\size{V} = n$ and $\size{E} = m$. Min-Uncut problem is known to be NP-hard, i.e., for some $1 \leq t \leq m$ it is NP-hard to find if the Min-Uncut has at most $t$ edges. We give a polynomial time reduction from Min-Uncut to subspace approximation as follows: Given an instance of Min-Uncut, construct a matrix $A \in \R^{(m+n) \times n}$ such that
\[
\min_{\norm{y}_{2} = \sqrt{n}} \norm{Ay}_{p}^{p} = \min_{\norm{y}_{2} = \sqrt{n}} \sum_{ij \in E} (y_{i} + y_{j})^{p} + N \sum_{i=1}^{n} y_{i}^{p},
\]
where $N$ is an integer polynomially large in $n$ and $m$ which will be chosen later.

\noindent {\sf Yes case:} The Min-Uncut has at most $t$ edges. Define $x_{i} = 1$ if $i \in S$ and $x_{i} = 1$ if $i \in T$. Using this $x \in \{-1, 1\}^{n}$ we get $\mathrm{OPT} \leq \norm{Ax}_{p}^{p} = t2^{p} + Nn$.

\noindent {\sf No case:} Otherwise, for any bipartition the Min-Uncut has at least $t+1$ edges, i.e., for any $x \in \{-1, 1\}^{n}$ we have $\sum_{ij \in E} (x_{i} + x_{j})^{p} \geq (t+1)2^{p}$. Now divide the sphere of radius $\sqrt{n}$ into two parts as follows:
\begin{align*}
S & = \{y \suchthat \norm{y}_{2} = \sqrt{n}~ \text{and}~ \abs{y_{j}} \in (1-\epsilon, 1+\epsilon)~ \text{for all}~ j \in [n]\}, \\
T & = \{y \suchthat \norm{y}_{2} = \sqrt{n}~ \text{and}~ y \notin S\},
\end{align*}
where $\epsilon < \nfrac{1}{p} \cdot (m+1)$. For any $y \in T$,
\begin{itemize}
\item Case $1$: $\abs{y_{i}} = 1 + \epsilon_{i} \geq 1 + \epsilon$ for some $i$. Then,
\begin{align*}
\sum_{j=1}^{n} y_{j}^{p} & \geq (1+\epsilon_{i})^{p} + (n-1) \left(\frac{n - (1+\epsilon_{i})^{2}}{n-1}\right)^{\nfrac{p}{2}} \\
& \geq (1+\epsilon_{i})^{p} + (n-1) \left(1 - \frac{2\epsilon_{i} + \epsilon_{i}^{2}}{n-1}\right)^{\nfrac{p}{2}} \\
& \geq 1 + p\epsilon_{i} + {p \choose 2} \epsilon_{i}^{2} + (n-1) \left(1 - \nfrac{p}{2} \cdot  \frac{2 \epsilon_{i} + \epsilon_{i}^{2}}{n-1}\right) \\
& \geq n + \frac{p^{2} \epsilon^{2}}{4} \qquad \text{using $p > 2\left(1+\frac{1}{n-1}\right)$ for large enough $n$}.
\end{align*}
\item Case $2$: $\abs{y_{i}} = 1 - \epsilon_{i} \leq 1 - \epsilon$ for some $i$. Then,
\[
\sum_{j \neq i} y_{j}^{2} = n - (1-\epsilon_{i}^{2}).
\]
Hence, there exists some $k$ such that
\[
y_{k}^{2} \geq \frac{n - (1-\epsilon_{i})^{2}}{n-1} \geq 1 + \frac{2\epsilon_{i} - \epsilon_{i}^{2}}{n-1} \geq 1 + \frac{\epsilon}{n} \Rightarrow \abs{y_{k}} \geq 1 + \frac{\epsilon}{2n}.
\]
Therefore, using the same analysis as in the previous case, we get
\[
\sum_{j=1}^{n} y_{j}^{p} \geq n + \frac{p^{2} \epsilon^{2}}{16n^{2}}.
\]
\end{itemize}
Using the above property of $y \in T$, we get
\begin{align*}
\sum_{ij \in E} (y_{i} + y_{j})^{p} + N \sum_{j=1}^{n} y_{j}^{p} & \geq N \sum_{j=1}^{p} y_{j}^{p} \\
& \geq Nn + \frac{Np^{2}\epsilon^{2}}{16n^{2}} \\
& > t2^{p} + Nn \qquad \text{using $N > 2^{p+4}n^{2}m(m+1)^{2}$}.
\end{align*}
For any $y \in S$,
\begin{align*}
\sum_{ij \in E} (y_{i} + y_{j})^{p} + N \sum_{j=1}^{n} y_{j}^{p} & \geq (1-\epsilon)^{p}(t+1)2^{p} + Nn \\
& \geq (1-p\epsilon)(t+1) 2^{p} + Nn \\
& > t2^{p} + Nn \qquad \text{using $\epsilon < \frac{1}{p(t+1)}$}.
\end{align*}
%
%
\end{document}